\newtheorem{theorem}{Theorem}
\begin{document}

\title{Co-evolution of Viral Processes and Structural Stability in Signed Social Networks}

\author{Temirlan~Kalimzhanov,
        Amir~Haji~Ali~Khamseh'i,
        Aresh~Dadlani,~\IEEEmembership{Senior Member,~IEEE,}
        Muthukrishnan~Senthil~Kumar,~\IEEEmembership{Member,~IEEE,}
        and~Ahmad~Khonsari,~\IEEEmembership{Member,~IEEE}\vspace{-0.5em}
\IEEEcompsocitemizethanks{
			\IEEEcompsocthanksitem T. Kalimzhanov and A. Dadlani are with the Department of Electrical and Computer Engineering, Nazarbayev University (NU), Nursultan 010000, Kazakhstan. 
			E-mails: \{temirlan.kalimzhanov, aresh.dadlani\}@nu.edu.kz
			\IEEEcompsocthanksitem A. H. A. Khamseh'i and A. Khonsari are with the School of Electrical and Computer Engineering, College of Engineering, University of Tehran, Tehran 1439957131, Iran. 
			E-mails: \{khamse, a\_khonsari\}@ut.ac.ir
\IEEEcompsocthanksitem M. S. Kumar is with the Department of Applied Mathematics and Computational Sciences, PSG College of Technology, Coimbatore 641004, India.
			E-mail: msk.amcs@psgtech.ac.in}
}

\markboth{IEEE TRANSACTIONS ON KNOWLEDGE AND DATA ENGINEERING,~Vol.~XX, No.~XX, 2021}%
{Shell \MakeLowercase{\textit{et al.}}: Bare Demo of IEEEtran.cls for Computer Society Journals}

\IEEEtitleabstractindextext{\justify%
\begin{abstract}
\fontdimen2\font=0.5ex
Prediction and control of spreading processes in social networks (SNs) are closely tied to the underlying connectivity patterns. Contrary to most existing efforts that exclusively focus on positive social user interactions, the impact of contagion processes on the temporal evolution of signed SNs (SSNs) with distinctive friendly (positive) and hostile (negative) relationships yet, remains largely unexplored. In this paper, we study the interplay between social link polarity and propagation of viral phenomena coupled with user alertness. In particular, we propose a novel energy model built on Heider’s balance theory that relates the stochastic susceptible-alert-infected-susceptible epidemic dynamical model with the structural balance of SSNs to substantiate the trade-off between social tension and epidemic spread. Moreover, the role of hostile social links in the formation of disjoint friendly clusters of alerted and infected users is analyzed. Using three real-world SSN datasets, we further present a time-efficient algorithm to expedite the energy computation in our Monte-Carlo simulation method and show compelling insights on the effectiveness and rationality of user awareness and initial network settings in reaching structurally balanced local and global network energy states.
\end{abstract}

\begin{IEEEkeywords}
Signed networks, epidemic process, balance theory, awareness, continuous-time Markov chain, energy function.\vspace{-0.2em}
\end{IEEEkeywords}}

		\maketitle

\IEEEdisplaynontitleabstractindextext
\IEEEpeerreviewmaketitle

\IEEEraisesectionheading{\section{Introduction}
\label{sec:introduction}}
\fontdimen2\font=0.39ex
\vspace{-0.2em}
\IEEEPARstart{Q}{uantitative} analysis of epidemic processes~such~as infectious diseases, malware codes, and rumors spreading over physical and online social networks (SNs)~has stimulated intense research activities \cite{Britton_2020,Huang_2020}. Owing to the pervasive use of social media and the abundance of data~extracted from several such networks, which for long were~merely~unavailable, the theoretical perception of epidemic dynamics driven by nodal interactions has refined substantially in recent years \cite{Dabarov_2020, Shafiei_2020}. While the vast majority of research has scrutinized only positive social relationships, user pairs may also signify enmity or distrust as perceived in reality. Subsequently, a~user may decisively decline to interact with a hostile contact and avoid involvement in further spread of the viral process \cite{Leskovec_2010}. Accounting for heterogeneous social interactions is thus,~crucial in characterizing social link valence evolution under the influence of individual user's attitudes towards viral spread.

Unlike conventional networks, signed SNs (SSNs) evolve based on the structural balance theory, pioneered by Heider \cite{Heider_1958}, where the relationship between any two users in a triad ($3$-clique) can be impacted by the third user \cite{Sun_2020}. That is to say, the theory posits that if ``\textit{the friend of my friend is my friend}'' and ``\textit{the friend of my enemy is my enemy}'', then the resulting triad will be balanced and will constitute an odd number of friendly links. Evidently, SSNs converge to structurally balanced states with minimum social tension by flipping the link polarity to maximize the number of balanced triads \cite{Hedayatifar_2017}.

In the jargon of networked epidemics, a handful of works focus on edge sign reconfiguration under the effect of evolving user states. The conditions to attain opinion convergence in generic SNs are obtained in \cite{Altafini_2013} using monotone dynamical systems. Further extended in \cite{Shi2015}, Shi \emph{et al.} analyze the asymptotic user state evolution affected~by deterministic weights on pairwise interactions by~formulating a relative-state-flipping model for consensus dynamics in random SSNs and prove the conditions leading to almost sure convergence and divergence. Their analysis assumes that the initial network structure is always balanced~which in truth, may not always be the case. Saeedian \emph{et al.} \cite{Saeedian_2017}~study the non-trivial coupled dynamics over a complete signed graph using an energy function. The authors adopt the susceptible-infected (SI) epidemic model to study the local and global energy minima of the system irrespective of the possibility of recovery to susceptibility~or epidemic~alertness. Lee \emph{et al.}\,\cite{Lee_2019} then~introduce an adaptive susceptible-infected-susceptible (SIS) model to reinforce transitivity by rewiring the links between susceptible and infected nodes rather than their signs. Though insightful, the emergent behaviors of the parallel processes in \cite{Lee_2019} are limited to the population level and do not address the microscopic dynamics inherent in user interactions. Zhang \emph{et\,al.} \cite{Zhao2017} present an approximation algorithm for the minimum partial positive influence seeding problem in viral marketing. Moreover, Li \emph{et~al.} \cite{Li_2021} propose a~non-stochastic computational~model~for~maximizing~polarity-related~linear influence diffusion in SNs. However, neither\,\cite{Zhao2017} nor \cite{Li_2021} subsume Heider's theory in the network structural evolution.

Thus far, there exists no work that investigates the~intriguing co-evolution of generic SN structures and~epidemic dynamics of a reversible process in conjunction with user alertness. Absent in classical epidemic models,~\textit{awareness} towards viral processes is an intrinsic human response~that plays the role of~a~natural~immunization strategy. Such~change in human behavior can be induced by learning about the~contagion spread from others without having to encounter it~firsthand. This hence, results in a coupled situation where an~infectious person and information about its presence spread~simultaneously when humans react to the presence of the infection. In fact, raising awareness is a widely-practiced~control strategy in dynamical systems as it alters the progression~of the viral spread \cite{Faryad_2011,Sahneh_2019}. Besides the analytical merits, such a~refined projection model may serve beneficial to network~administrators, social~influencers, and decision makers in devising optimal resource allocation policies. In view of this research gap, the main~contributions of this work are as follows:\vspace{-0.4em}
\begin{itemize}
	\item Inspired by Heider’s balance\,theory, a novel energy-based framework is proposed to jointly~minimize~the number of unbalanced triads that contribute~to~the~social tension while mitigating~the~viral spread~in SNs. To capture users’ response to such~processes, we formulate the susceptible-alert-infected-susceptible (SAIS) epidemic model \cite{Faryad_2011} as a continuous-time Markov chain (CTMC) and derive the stationary probabilities to investigate the virtues of promoting awareness on the network structural evolution.
	\item By incorporating a tuning parameter, we then analyze cases for which the initial fraction of positive links and the initially infected users induce \textit{natural immunization} by segregating the alerted and infected users into two clusters interconnected via unfriendly links.
	\item Our model is evaluated on three real SSN datasets~by employing a time-efficient Monte Carlo simulation method under different parametric~settings.\vspace{-1.1em}
\end{itemize}
\begin{figure}[!t]
	\centering
	\includegraphics[width=2.5in]{./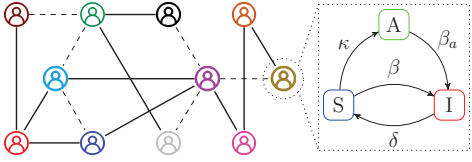}
	\vspace{-0.8em}
	\caption{Schematic of the SAIS spreading model over an SN with positive (solid) and negative (dashed) social relationships.}
	\label{fig1}
	\vspace{-0.4em}
\end{figure}

\section{The Coupled Network Model}
\label{sec2}
\fontdimen2\font=0.45ex
In this section, we first formulate the stochastic SAIS model re-purposed for general spreading processes in our analysis. The proposed energy model is then detailed subsequently.\vspace{-0.6em}

\subsection{Stochastic Epidemic Model Description}
\label{sec2.1}
\fontdimen2\font=0.43ex
We consider an undirected SSN, represented by the graph $\mathcal{G}_t \!=\! (\mathcal{V},\mathcal{E}_t)$, with a set $\mathcal{V} \!=\! \{1,2,\ldots,n\}$ of $n$ users that~form friendly ($1$), hostile ($-1$), or no ($0$) social links. The link~polarity of user pair ${(i,j) \!\in\! \mathcal{E}_t}$ at any given time ${t}$ is denoted~by ${\mathcal{A}_{i,j}(t) \!\in\! \{-1,0,1\}}$ \cite{Saeedian_2017}. For epidemic spreading over $\mathcal{G}_t$, the SAIS model in \figurename{~\ref{fig1}} is used, where each user is in the susceptible (S), alert (A), or infected (I) state at time~$t$. User $i$ is said to be susceptible if he/she is completely unaware of the spreading process. Since these processes do not propagate over negative links in SNs \cite{Tang_2016}, a susceptible user gets infected with rate $\beta\!\in\!\mathbb{R}^+$\,times the number of its infected friendly contacts~\cite{Faryad_2011}. A user aware of the process however, is less~likely to get infected, with a lower infection rate $0 \!\leq\! \beta_a \!<\! \beta$, as compared to a susceptible user. Unlike the irreversible SI model in \cite{Saeedian_2017}, a susceptible user becomes aware of the process with rate $\kappa \!\in\! \mathbb{R}^+$ times the number of direct infected friends and all infected users may eventually recover back to susceptibility with rate $\delta \!\in\! \mathbb{R}^+$. For all $i \!\in\! \mathcal{V}$, the network state can thus, be expressed formally as the CTMC $\{X_i(t); t \!\geq\! 0\}$, where:\vspace{-0.5em}
\begin{equation}
	\label{eq1}
	X_i(t) =
	\begin{cases} 
      	1; & \text{if user $i$ is susceptible at time $t$},\vspace{-0.2em} \\
      	0; & \text{if user $i$ is alert at time $t$},\vspace{-0.2em} \\
      	-1; & \text{if user $i$ is infected at time $t$}.
   	\end{cases}
   	\vspace{-0.5em}
\end{equation}
Using \eqref{eq1}, we now can define the probability of user $i$ being in one of the three epidemic states as $S_i(t) = \Pr[X_i(t)=1]$, $A_i(t)\! =\! \Pr[X_i(t)=0]$, and $I_i(t) \! = \! \Pr[X_i(t) \!=\! -1]$ such that for $1 \leq i \leq n$, $S_i(t) + A_i(t) + I_i(t) = 1$ always holds.\vspace{-0.6em}
\begin{figure}[!t]
	\centering
	\includegraphics[width=3.05in]{./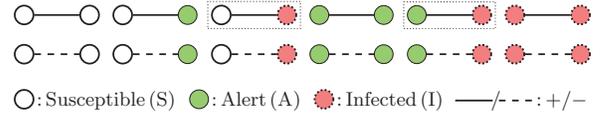}
	\vspace{-0.8em}
	\caption{All possible pairwise links in the SAIS spreading model. The edges enclosed in boxes are the only two transmissible configurations.}
	\label{fig2}
	\vspace{-0.4em}
\end{figure}

\subsection{Pairwise Spreading Energy Function}
\label{sec2.2}
\fontdimen2\font=0.43ex
We now delineate the sign evolution of user interactions~in~the context of energy. Given the three epidemic states (S, A, I) and binary link signs ($-,+$), there exist $12$ distinct user pair configurations as shown in \figurename{~\ref{fig2}}. We characterize the viral potency by mapping each user pair configuration $(i,j)$, where $i,j \in \mathcal{V}$, to the energy landscape as follows:\vspace{-0.5em}
\begin{equation}
	\label{eq2}
	E^{p}_{i,j}(t) \!\triangleq \!
	\begin{cases} 
      	\!\mathcal{A}_{i,j}(t) \frac{\big(\!X_i(t) - X_j(t)\!\big)^2}{4}; & \!\!\text{if $\lvert X_i \!+\! X_j \rvert \!\bmod\! 2 \!=\! 0$},\! \\
      	\noalign{\vskip4pt}
      	\!\mathcal{A}_{i,j}(t) \frac{1 - X_i(t) - X_j(t)}{2}; & \!\!\text{otherwise}.
   	\end{cases}
   	\vspace{-0.5em}
\end{equation}

Based on the functional value of \eqref{eq2}, the configurations depicted in \figurename{~\ref{fig2}} can be classified as follows:\vspace{-0.4em}
\begin{itemize}
 	\item  \emph{Balanced edges:} As long as configurations $\mathrm{S}-\mathrm{I}$ and A$-$I do not flip their edge signs while evolving, the users~$i$ and $j$ are in a balanced social relationship and do~not engage in the propagation process \cite{Saeedian_2017}. Therefore,~they exhibit a pairwise energy of $E^{p}_{i,j}(t)=-1$.
 	\item \emph{Unbalanced edges:} Cases in which a susceptible or alert user is in a friendly relationship with an infected user are socially unstable and are bound to change with time. In our model, $\mathrm{S}+\mathrm{I}$ and $\mathrm{A}+\mathrm{I}$ serve as feasible links for epidemic spread and thus, the users are in an unbalanced state with pairwise energy of $E^{p}_{i,j}(t)=1$.
 	\item \emph{Neutral edges:} Irrespective of the edge sign, configurations $\mathrm{S}\pm\mathrm{S}$, $\mathrm{I}\pm\mathrm{I}$, $\mathrm{A}\pm\mathrm{A}$, and $\mathrm{S}\pm\mathrm{A}$ do not contribute to the spreading process, and thus, exhibit zero pairwise energy, i.e., $E^{p}_{i,j}(t)=0$.
\end{itemize}\vspace{-0.2em}

Accordingly, the total pairwise spreading energy of network $\mathcal{G}_t$, denoted by $E_{p}(\mathcal{G}_t)$, can be computed as:\vspace{-0.4em}
\begin{equation}
	\label{eq3}
	E_{p}(\mathcal{G}_t) = \frac{1}{\binom{n}{2}}\sum\limits_{\substack{i,j \\ i \neq j}} E^{p}_{i,j}(t).
	\vspace{-0.4em}
\end{equation}

\subsection{Triad Structural Energy Function}
\label{sec2.3}
\fontdimen2\font=0.45ex
Along with the users' epidemic states, edge sign evolution is also driven by Heider's structural balance criterion. A triad~of users in $\mathcal{G}_t$, denoted by $(i,j,k)$, is said to be balanced if the product of $\mathcal{A}_{i,j}(t) \cdot \mathcal{A}_{j,k}(t) \cdot \mathcal{A}_{k,i}(t)$ is positive. In other words, an unbalanced triad will always have an odd number of~negative edges. Hence, network $\mathcal{G}_t$ is \textit{fully} balanced only if all the constituent triads are balanced. Conforming to Heider's balance theory, the structural status of any triad $(i,j,k)$ can~be mapped to the energy landscape as below \cite{Heider_1958}:\vspace{-0.2em}
\begin{equation}
	\label{eq4}
	E^{\blacktriangle}_{i,j,k}(t) \!\triangleq \!-\mathcal{A}_{i,j}(t) \cdot \mathcal{A}_{j,k}(t) \cdot \mathcal{A}_{k,i}(t).
	\vspace{-0.3em}
\end{equation}

With \eqref{eq4} in place, the energy contribution of any balanced (unbalanced) triad in $\mathcal{G}_t$ is $E^{\blacktriangle}_{i,j,k}(t) \!=\! -1$ $(E^{\blacktriangle}_{i,j,k}(t) \!=\! 1)$.~\figurename{\,\ref{fig3}} showcases all the balanced triads for the SAIS and~the SIS (baseline) models. To converge towards lower energy states (which corresponds to more social stability), users in unbalanced triads tend to flip their link signs which in turn, affects the configuration of other triads that share common edges with them. Consequently, the total normalized energy of $\mathcal{G}_t$, denoted by $E_{\triangle}(\mathcal{G}_t)$, is as follows, where $-1 \!\leq\! E_{\triangle}(\mathcal{G}_t) \!\leq\! 1$:\vspace{-0.4em}
\begin{equation}
	\label{eq5}
	E_{\triangle}(\mathcal{G}_t) = \frac{1}{\binom{n}{3}}\sum\limits_{\substack{i,j,k \\ i \neq j \neq k}} E^{\blacktriangle}_{i,j,k}(t)\,,
	\vspace{-0.3em}
\end{equation}
Users in triads decide on whether or not to alter their relationships only if the total triad energy of the resulting network is further reduced. Apparently, SNs that manifest triad energy values closer to $-1$ tend to be socially more stable and thus, pragmatically justified.\vspace{-0.8em}
\begin{figure}[!t]
	\centering
	\includegraphics[width=2.5in]{./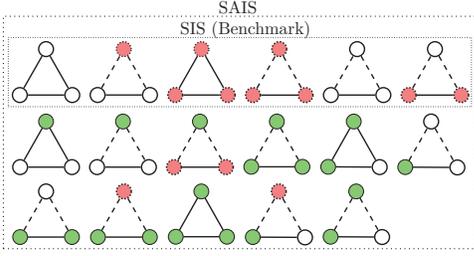}
	\vspace{-0.9em}
	\caption{All possible balanced triads in the SAIS and the SIS models.}
	\label{fig3}
	\vspace{-0.4em}
\end{figure}

\subsection{Weighted Network Energy Function}
\label{sec2.4}
\fontdimen2\font=0.45ex
We now define the total energy of network $\mathcal{G}_t$, given by $E(\mathcal{G}_t)$, as the weighted sum of the overall pairwise and triad energy functions derived in \eqref{eq3} and \eqref{eq5}, respectively:\vspace{-0.3em}
\begin{equation}
	\label{eq6}
	E(\mathcal{G}_t) = \alpha \cdot E_{\triangle}(\mathcal{G}_t) + (1 - \alpha) \cdot E_p(\mathcal{G}_t)\,,
	\vspace{-0.3em}
\end{equation}
where $\alpha (0 \!\leq\! \alpha \!\leq\! 1)$ is the tuning parameter used to adjust~the energy trade-off between the epidemic spread $(\alpha \!=\! 0)$ and the structural balance $(\alpha \!=\! 1)$ in the network. Hence, if~$\mathcal{G}_t$ is fully balanced (fully unbalanced), then $E(\mathcal{G}_t) \!=\! -1$ $(E(\mathcal{G}_t) \!=\! 1)$, which is more likely to be achieved in smaller graphs. Note that for some fixed $\alpha$ value, attaining the global~(local) energy minimum state in which $\mathcal{G}_t$ is fully (or nearly) balanced~depends on the initial fractions of infected users $(0 \leq \rho_0 \leq 1)$ and friendly links $(0 \leq r_0 \leq 1)$ \cite{Saeedian_2017}.\vspace{-0.8em}
\begin{figure}[!t]
	\centering
	\includegraphics[width=3.2in]{./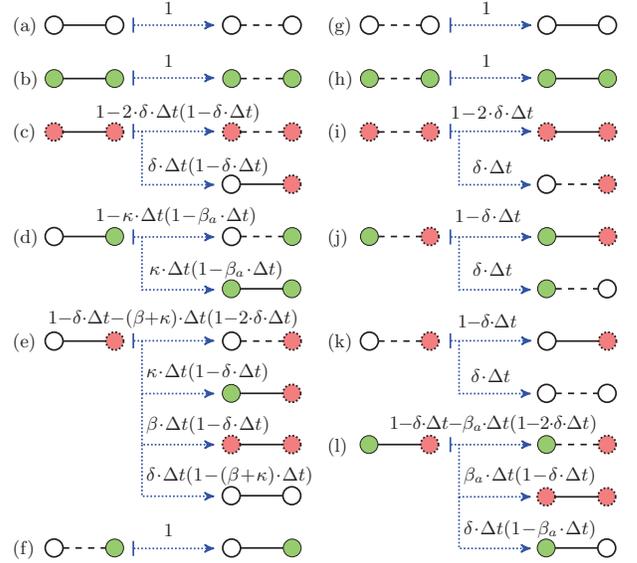}
	\vspace{-0.6em}
	\caption{Transition probabilities for temporal evolution of states in \figurename{\,\ref{fig2}}.}
	\label{fig4}
	\vspace{-0.4em}
\end{figure}


\section{Steady-State Probability Distribution}
\label{sec3}
\fontdimen2\font=0.44ex
Assuming that only one event is triggered in each time step, $\Delta t \!\ll\! 1$, i.e., either the epidemic state of exactly one user pair $(i,j)$ changes or the edge sign is flipped, the rules defining the user pair transitions are shown in \figurename{~\ref{fig4}}. The balanced edges $\mathrm{S} \!-\! \mathrm{I}$ and $\mathrm{A} \!-\! \mathrm{I}$ transition to $\mathrm{S} \!-\! \mathrm{S}$ and $\mathrm{A} \!-\! \mathrm{S}$, respectively, with probability $\delta \cdot \Delta t$ or change to $\mathrm{S} \!+\! \mathrm{I}$ and $\mathrm{A} \!+\! \mathrm{I}$, respectively, with probability $1 \!-\! \delta \cdot \Delta t$. For unbalanced edges, $\mathrm{S} \!+\! \mathrm{I}$ changes to $\mathrm{S} \!-\! \mathrm{I}$ with probability $1 \!- \delta \cdot \Delta t - (\beta + \kappa) \cdot \Delta t \cdot (1 - 2 \delta \cdot \Delta t)$ or switches to states $\mathrm{A} \!+\! \mathrm{I}$, $\mathrm{I} \!+\! \mathrm{I}$, or $\mathrm{S} \!+\! \mathrm{S}$ with probabilities $\kappa \cdot \Delta t (1 - \delta \cdot \Delta t)$, $\beta \cdot \Delta t (1 - \delta \cdot \Delta t)$, and $\delta \cdot \Delta t (1 - (\beta + \kappa) \cdot \Delta t)$, respectively.~Also, $\mathrm{A} \!+\! \mathrm{I}$ switches to $\mathrm{A} \!-\! \mathrm{I}$, $\mathrm{I} \!+\! \mathrm{I}$, or $\mathrm{A} \!+\! \mathrm{I}$ with probabilities $1 \!- \delta \cdot \Delta t - \beta_a \cdot \Delta t \cdot (1 - 2 \delta \cdot \Delta t)$, $\beta_a \cdot \Delta t (1 - \delta \cdot \Delta t)$, and $\delta \cdot \Delta t (1 - \beta_a \cdot \Delta t)$, respectively. Among the neutral edges, $\mathrm{S} \!\pm\! \mathrm{S}$, $\mathrm{A} \!\pm\! \mathrm{A}$, and $\mathrm{S} \!-\! \mathrm{A}$ flip their edge signs with probability $1$ in each time step, $\mathrm{S} \!+\! \mathrm{A}$ changes to either $\mathrm{S} \!-\! \mathrm{A}$ or $\mathrm{A} \!+\! \mathrm{A}$ with probabilities $1 - \kappa \cdot \Delta t \cdot (1 - \beta_a \cdot \Delta t)$ and $\kappa \cdot \Delta t \cdot (1 - \beta_a \cdot \Delta t)$, respectively. Finally, $\mathrm{I} \!\pm\! \mathrm{I}$ changes to $\mathrm{I} \!\mp\! \mathrm{I}$ with probabilities $1 - 2\delta \cdot \Delta t \cdot (1 - \delta \cdot \Delta t)$ and $1 - 2 \delta \cdot \Delta t$, respectively, and to $\mathrm{S} \!\pm\! \mathrm{I}$ with probabilities $\delta \cdot \Delta t \cdot (1 - \delta \cdot \Delta t)$ and $\delta \cdot \Delta t$, respectively.

In general, given the pair $(i,j)$, the tri-variate~CTMC of the form $\{Z_{i,j}(t); t \!\geq\! 0\}$, where $Z_{i,j}(t) \!=\! \big(X_i(t), X_j(t), \mathcal{A}_{i,j}(t)\big)$, defines these conditional transition probabilities as follows:\vspace{-0.3em}
\begin{align} 
	\label{eq7}
		\mathcal{P}_{c,c'}(\Delta t) \triangleq \Pr \Big[Z_{i,j}(t + \Delta t) = c' \big| Z_{i,j}(t) = c\Big],
		\vspace{-0.3em}
\end{align}
where $c \!=\! (x,y,z)$, $c' \!=\! (x',y',z')$, and $x,y,z,x',y',z' \!\in \{-1,0,1\}$. Based on \eqref{eq7}, the stead-state probability distribution is derived in Theorem~1.
\begin{theorem}
Let $\big\{\pi_{x,y,z} \big| x,y,z \in \{-1,0,1\}\big\}$ be the stationary probabilities for the CTMC $Z_{i,j}(t)$ defined above, then we have the following in steady-state:\vspace{-0.3em}
	\begin{enumerate}
  		\item[(i)] The fraction of susceptible users $(s_{\infty})$ is $\sum_{y,z} \pi_{1,y,z}$.
  		\item[(ii)] The fraction of infected users $(\rho_{\infty})$ is $\sum_{y,z} \pi_{-1,y,z}$.
  		\item[(iii)] The fraction of alerted users $(a_{\infty})$ is $\sum_{y,z} \pi_{0,y,z}$.
  		\item[(iv)] The fraction of friendly links $(r_{\infty})$ is $\sum_{x,y} \pi_{x,y,1}$.
\end{enumerate}
\end{theorem}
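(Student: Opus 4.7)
The plan is to reduce the statement to standard marginalization on a finite, irreducible CTMC, and then transfer the resulting pair-level marginals to network-wide fractions via the symmetry built into how the transition rules of \figurename{\,\ref{fig4}} and \eqref{eq7} are written.

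First, I would argue that $Z_{i,j}(t)$ is a finite, irreducible, positive-recurrent CTMC. Its state space is contained in $\{-1,0,1\}^3$, so it has at most $27$ entries (minus a small number of configurations that are never reached). Every feasible triple can be reached from any other via a finite sequence of $\beta$-, $\beta_a$-, $\kappa$-, $\delta$-, and sign-flip transitions, each of which occurs at a strictly positive rate. Classical Markov chain theory then guarantees existence and uniqueness of the stationary distribution $\{\pi_{x,y,z}\}$ as the unique probability vector satisfying $\pi Q=0$, where $Q$ is the generator implicit in \eqref{eq7}.

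Second, items (i)--(iv) would follow from basic marginalization of $\pi$. For a fixed pair $(i,j)$, the steady-state probability that user $i$ occupies epidemic state $x$ is $\sum_{y,z}\pi_{x,y,z}$, while the steady-state probability that the link between them has sign $z$ is $\sum_{x,y}\pi_{x,y,z}$. Specializing $x\in\{1,-1,0\}$ and $z=1$ yields exactly the four right-hand sides in the statement.

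The main obstacle lies in the final identification of pair marginals with population-level fractions $s_{\infty}$, $\rho_{\infty}$, $a_{\infty}$, $r_{\infty}$. The transition rules in Section~\ref{sec3} depend only on the local configuration of the pair $(i,j)$ and not on the specific user or edge labels, so the stationary marginal of $X_i$ is identical across all $i$, and the stationary marginal of $\mathcal{A}_{i,j}$ is identical across all edges. Under this homogeneity---which is the implicit mean-field hypothesis used to derive the pair transition probabilities in the first place---the expected fraction of users in a given epidemic state equals the single-user marginal, and the expected fraction of friendly links equals the single-link marginal. A routine exchangeability or ergodic-theorem argument upgrades this equality in expectation to an almost-sure identity in the large-network limit, closing the proof. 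I would flag this homogeneity assumption explicitly at the start of the proof so that Steps~1 and~2 can then proceed as direct bookkeeping.
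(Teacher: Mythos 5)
Your proposal is correct and follows essentially the same route as the paper: both treat $Z_{i,j}(t)$ as a finite CTMC on at most $27$ states, obtain the stationary vector from the generator built out of \eqref{eq7}--\eqref{eq8} via $\Pi\cdot Q=0$, $\Pi\cdot 1=1$, and then read off (i)--(iv) by marginalizing over the remaining coordinates. Your added remarks on irreducibility and on the user/edge homogeneity needed to equate pair-level marginals with network-wide fractions only make explicit assumptions the paper leaves implicit (it simply sets $s_{\infty}=\sum_i S_i(\infty)/|\mathcal{V}|$ and identifies it with $\sum_{y,z}\pi_{1,y,z}$), so no substantive difference in approach remains.
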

\begin{proof}
	We use \eqref{eq7} to obtain the elements of the infinitesimal generator matrix $Q=[q_{c,c'}]$ of order $27$ as follows:\vspace{-0.3em}
	\begin{equation}
	\label{eq8}
	q_{c,c'} = 
	\begin{cases} 
      	\lim\limits_{\Delta t \to 0} \frac{\mathcal{P}_{c,c'}(\Delta t) - 1}{\Delta t}; & \text{if $c' = c$}, \\
      	\noalign{\vskip2pt}
      	\lim\limits_{\Delta t \to 0} \mathcal{P}_{c,c'}(\Delta t); & \text{if $c' \neq c$}.
   	\end{cases}
   	\vspace{-0.3em}
	\end{equation}
	From \eqref{eq8}, we now can obtain the stationary probabilities by solving $\Pi \cdot Q = 0$ and $\Pi \cdot 1 = 1$, where $\Pi = \big\{\pi_{x,y,z} \big| x,y,z \in \{-1,0,1\}\big\}$. Denoted by $s_{\infty} \!=\! \sum_{i=1}^n S_i(\infty) / |\mathcal{V}|$, the fraction of susceptible users is computed as:\vspace{-0.4em}
	\begin{align}
		s_{\infty} 
		                         =\, &\sum_{y,z} \pi_{1,y,z}.  \nonumber\vspace{-0.5em}
	\end{align}
Similarly, the steady-state probabilities for $\rho_{\infty}$, $a_{\infty}$, and $r_{\infty}$ can be obtained straightforwardly.
\end{proof}
\begin{algorithm}[!t]
 \caption{Time-efficient Network Energy Calculation}
 {\small\begin{algorithmic}[1]
 \renewcommand{\algorithmicrequire}{\textbf{Input:}}
 \renewcommand{\algorithmicensure}{\textbf{Output:}}
 \REQUIRE $\mathcal{G}_0 \!=\! (\mathcal{V},\mathcal{E}_0)$, $\alpha$, $\rho_0$, $r_0$, $\beta$, $\beta_a$, $\kappa$, and $\delta$.
 \ENSURE   $E_{min}(\mathcal{G})$.
 \\ \textit{Initialization}: $\forall (i,j), \mathcal{A}_{i,j}(0) \in \mathcal{E}_0=-1$, and $e = 1$.
 \STATE Calculate $E(\mathcal{G}_0)$ using \eqref{eq6}.
 \STATE $e \gets E(\mathcal{G}_0)$.
	\FOR {$t \gets \Delta t$ to $T \cdot \Delta t$}
		\STATE Randomly select an edge $(i,j) \in \mathcal{E}_{t-\Delta t}$.
		\STATE Change the state of edge $(i,j)$ according to \figurename{\,\ref{fig4}}.
		\STATE Update $E(\mathcal{G}_t)$ using \eqref{eq9}, \eqref{eq11}, and \eqref{eq12}.
		\IF {$E(\mathcal{G}_t) == e$}
			\STATE $e \leftarrow E(\mathcal{G}_t)$ with probability $0.5$.
		\ELSIF {$E(\mathcal{G}_t) < e$}
			\STATE $e \gets E(\mathcal{G}_t)$
		\ENDIF
	\ENDFOR
	\RETURN $E_{min}(\mathcal{G}) \leftarrow e$\vspace{-0.2em}
 \end{algorithmic} }
 \label{alg1}
\end{algorithm}
\vspace{-1em}

\section{Monte Carlo Method}
\label{sec4}
\fontdimen2\font=0.42ex
Starting\,\,from\,\,an\,\,initial\,\,network\,\,state\,\,at\,\,$t$\,$=$\,$0$,\,where a fraction of users $(\rho_0)$ are randomly infected, we select an edge $(i,j)$ at random in each evolution step of the simulation and change its state as in \figurename{~\ref{fig4}}. Doing so affects the energy states of all triads that share edge $(i,j)$ in the long-term~which successively, alters the total network energy state.~Convergence towards the new network structure transpires as long as the new energy state decreases in each time step \cite{Saeedian_2017}, i.e., this process continues until the global minimum $(E(\mathcal{G}_t) \!=\! -1)$ or a local $(E(\mathcal{G}_t) \!>\! -1)$ minimum energy state is reached. Computing $E(\mathcal{G}_t)$ using \eqref{eq6} in each time step takes $O\big(\binom{n}{2} + \binom{n}{3}\big)$ time. To expedite the computation, we propose Algorithm~\ref{alg1} that evaluates the energy\,\,difference\,\,of the selected~edge between consecutive time steps, $t'$ and $t''$ $(t'' \!=\! t' \!+\! \Delta t)$, in $O(1)$ time as:\vspace{-0.4em}
\begin{equation}
	\label{eq9}
	\Delta E(i,j) = \alpha \cdot \Delta E_{\triangle}(i,j) + (1 - \alpha) \cdot \Delta E_p(i,j)\, ,
	\vspace{-0.4em}
\end{equation}
where $\Delta E_{\triangle}(i,j)$ is the difference in the triad energy, i.e.,\vspace{-0.4em}
\begin{align}
	\label{eq10}
	\!\!\!\!\Delta E_{\!\triangle\!}(i,\!j) \!&=\! \frac{1}{\binom{n}{3}}\!\sum_{i'\!,j'\!,k'} \!\!\!\big(E^{\blacktriangle}_{i'\!,j'\!,k'\!}(t'') \!-\! E^{\blacktriangle}_{i'\!,j'\!,k'\!}(t')\big)\nonumber \vspace{-0.1em}\\
	\!&=\! \frac{\mathcal{A}_{i,j}(t')\! - \! \mathcal{A}_{i,j}(t'')}{\binom{n}{3}}\!\!\sum_{k'\neq i,j}\!\! \mathcal{A}_{i,k'\!}(t') \cdot \mathcal{A}_{j,k'\!}(t').
	\vspace{-1.5em}
\end{align}
If the state transition does not flip the edge sign, then $E_{\triangle}(\mathcal{G}_t)$ remains unaltered. Otherwise, flipping the edge sign implies that $\mathcal{A}_{i,j}(t'') \!= \!- \mathcal{A}_{i,j}(t')$, which further reduces \eqref{eq10} to:\vspace{-0.4em}
\begin{align}
	\label{eq11}
	\!\!&\begin{cases} 
      	\!\!\frac{-2 \mathcal{A}_{i,j}(t'')}{\binom{n}{3}} \!\!\!\sum\limits_{k'\neq i,j}\!\!\!\!\mathcal{A}_{i,k'\!}(t') \!\cdot\! \mathcal{A}_{j,k'\!}(t'); & \text{if $\mathcal{A}_{i,j\!}(t'') \!=\! -\mathcal{A}_{i,j\!}(t')$},\vspace{-0.3em} \\
      	\noalign{\vskip0.3pt}
      	0; & \text{if $\mathcal{A}_{i,j}(t'') \!=\! \mathcal{A}_{i,j}(t')$}.
      	\vspace{-0.3em}
   	\end{cases}
   	\vspace{-0.7em}
\end{align}
Similarly, $\Delta E_p(i,j)$ can be computed as follows:\vspace{-0.5em}
\begin{equation}
	\label{eq12}
	\Delta E_p(i,j) = \frac{1}{\binom{n}{2}}\big(E^p_{i,j}(t'') - E^p_{i,j}(t')\big)\,.
	\vspace{-0.7em}
\end{equation}

\section{Simulation Results and Discussions}
\label{sec5}
\fontdimen2\font=0.42ex
We evaluate the proposed energy~framework with respect~to $r_0$, $\rho_0$, and $\alpha$. For Case Study I, we generate a complete network of $|\mathcal{V}| \!=\! 180$ users to entail the maximum number of triads in our analysis. We then apply the bootstrap technique to extract i.i.d. samples of smaller connected components of size $|\mathcal{V}|$ from the Slashdot$081106$ (SL, $|\mathcal{V}| \!=\! 747$)\,\cite{snapnets} and~the Bitcoin-OTC (BC, $|\mathcal{V}| \!=\! 709$)\,\cite{snapnets} datasets, and the entire~US Congress co-sponsorship (CS, $|\mathcal{V}| \!=\! 100$, $|\mathcal{E}_t| \!=\! 3696$)\,\cite{Neal2020} dataset in Case Study II. To ensure better theoretical predictions, all Monte Carlo simulation results are averaged over $100$ runs~on a PC with $\SI{3.2}{\giga\hertz}$ Intel Core i$9$-$9900$ CPU and $\SI{16}{\giga\byte}$ RAM\footnote{GitHub repository: https://github.com/cnsl-nu/Co-evolution-of-Viral-Processes-and-Structural-Stability-in-Signed-Social-Networks}.\vspace{-0.5em}
\begin{figure}[!t]
\centering
\subfloat[\vspace{-0.1em}Susceptible density vs. time]{\label{fig5a}\includegraphics[width=1.564in]{./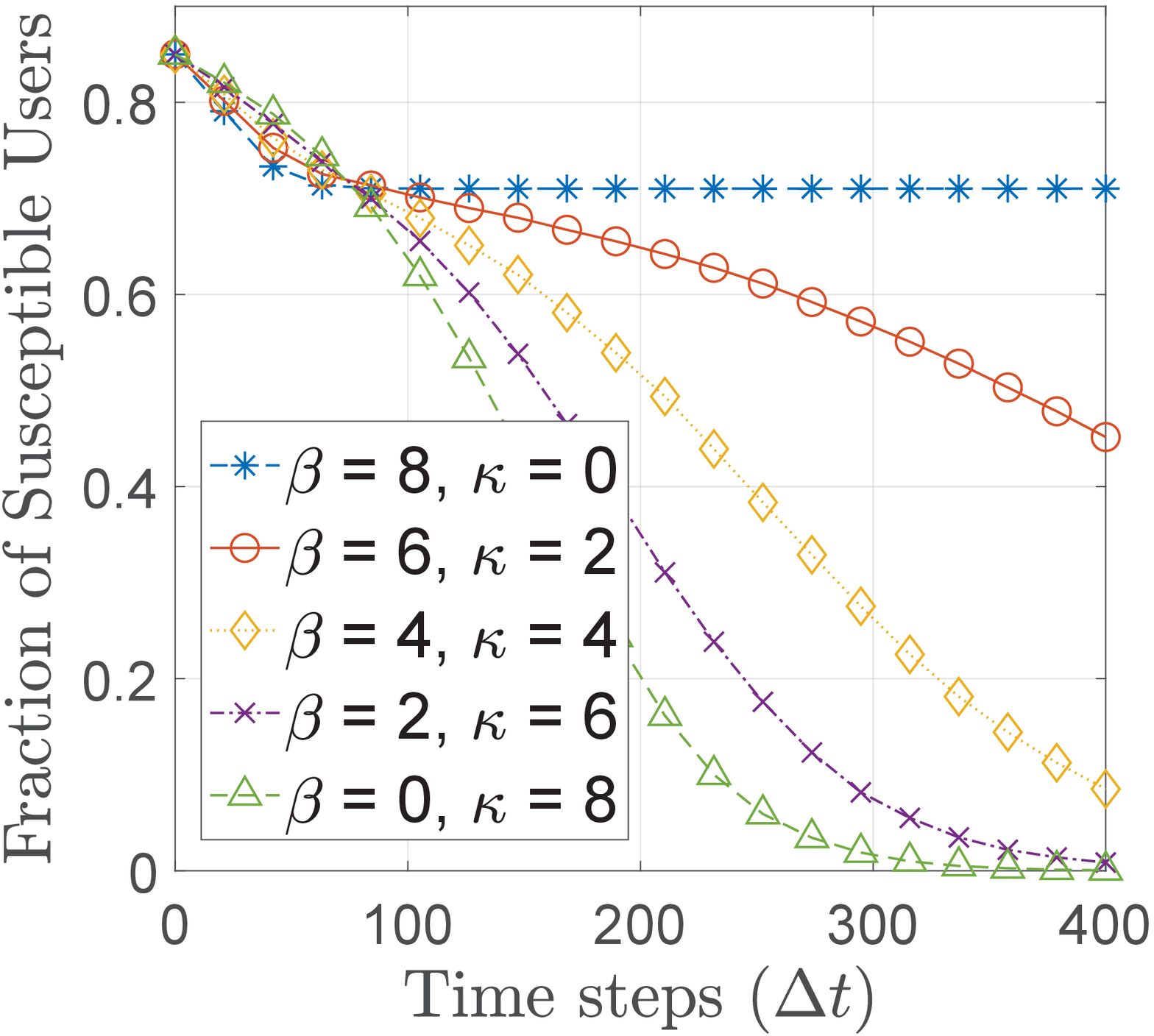}}
~~
\subfloat[\vspace{-0.1em}Alerted density vs. time]{\label{fig5b}\includegraphics[width=1.565in]{./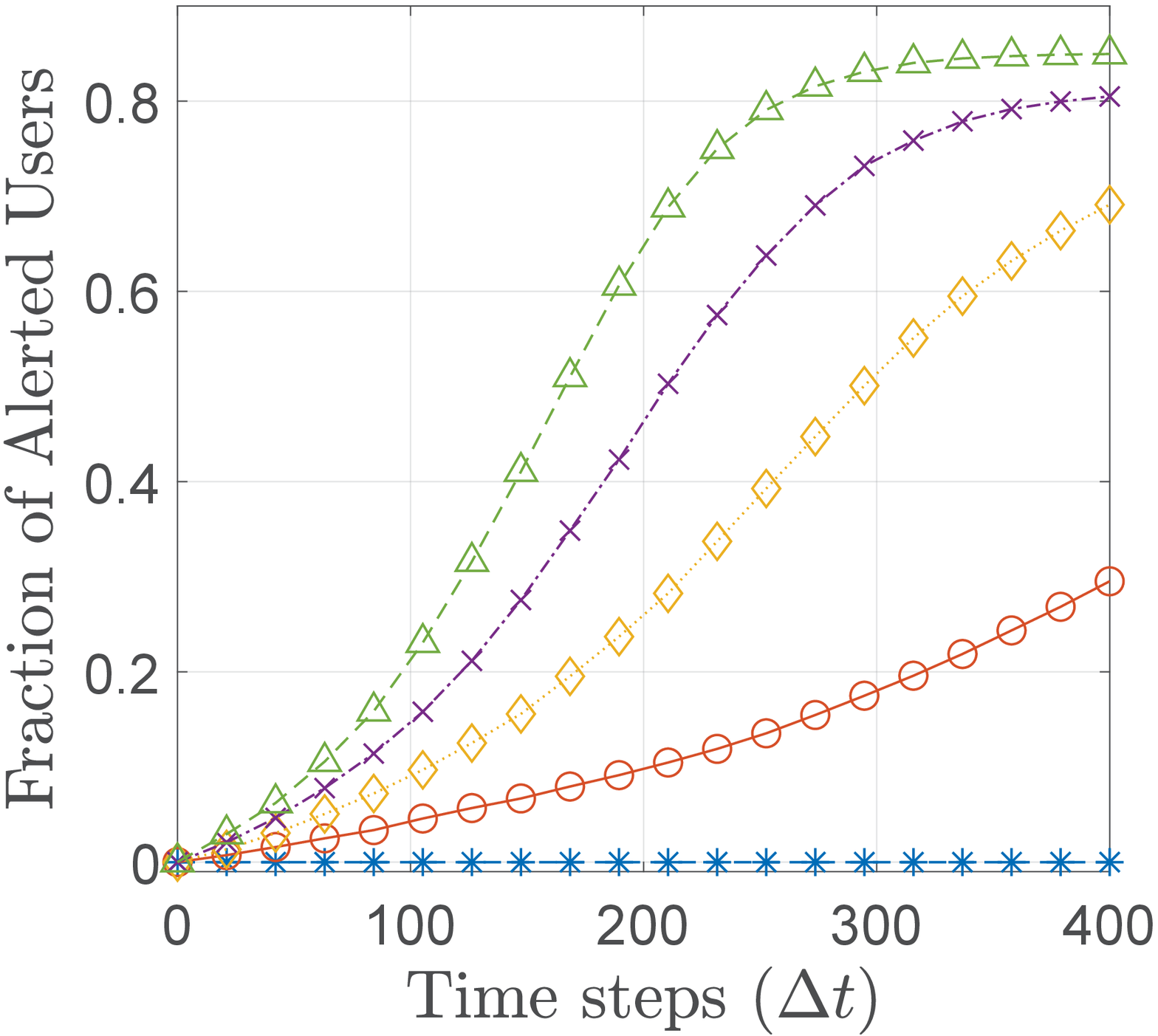}}
\vspace{-0.4em}

\centering
\subfloat[\vspace{-0.1em}Infected density vs. time]{\label{fig5c}\includegraphics[width=1.566in]{./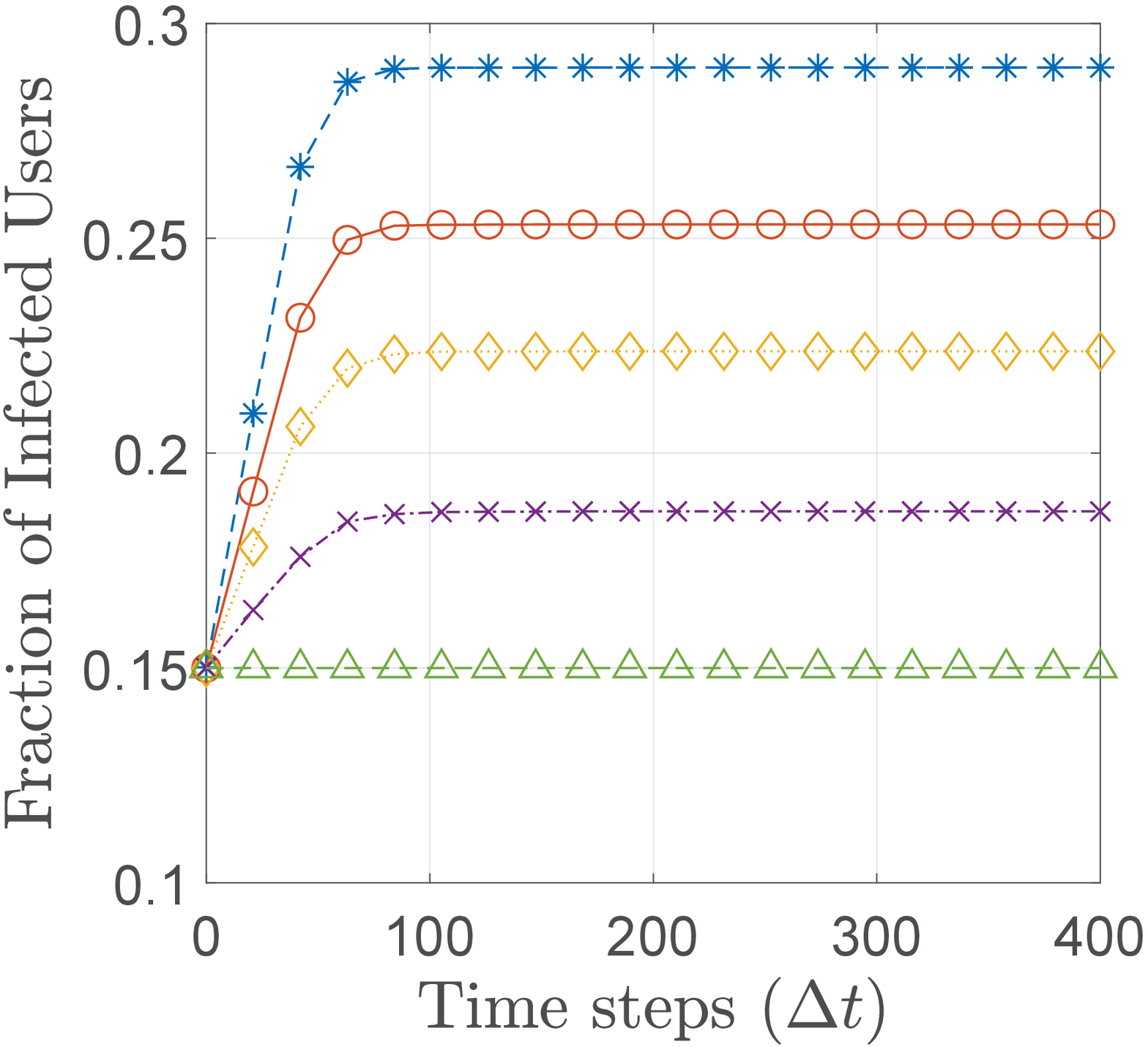}}
~~
\subfloat[\vspace{-0.1em}No. of freindly links vs. time]{\label{fig5d}\includegraphics[width=1.55in]{./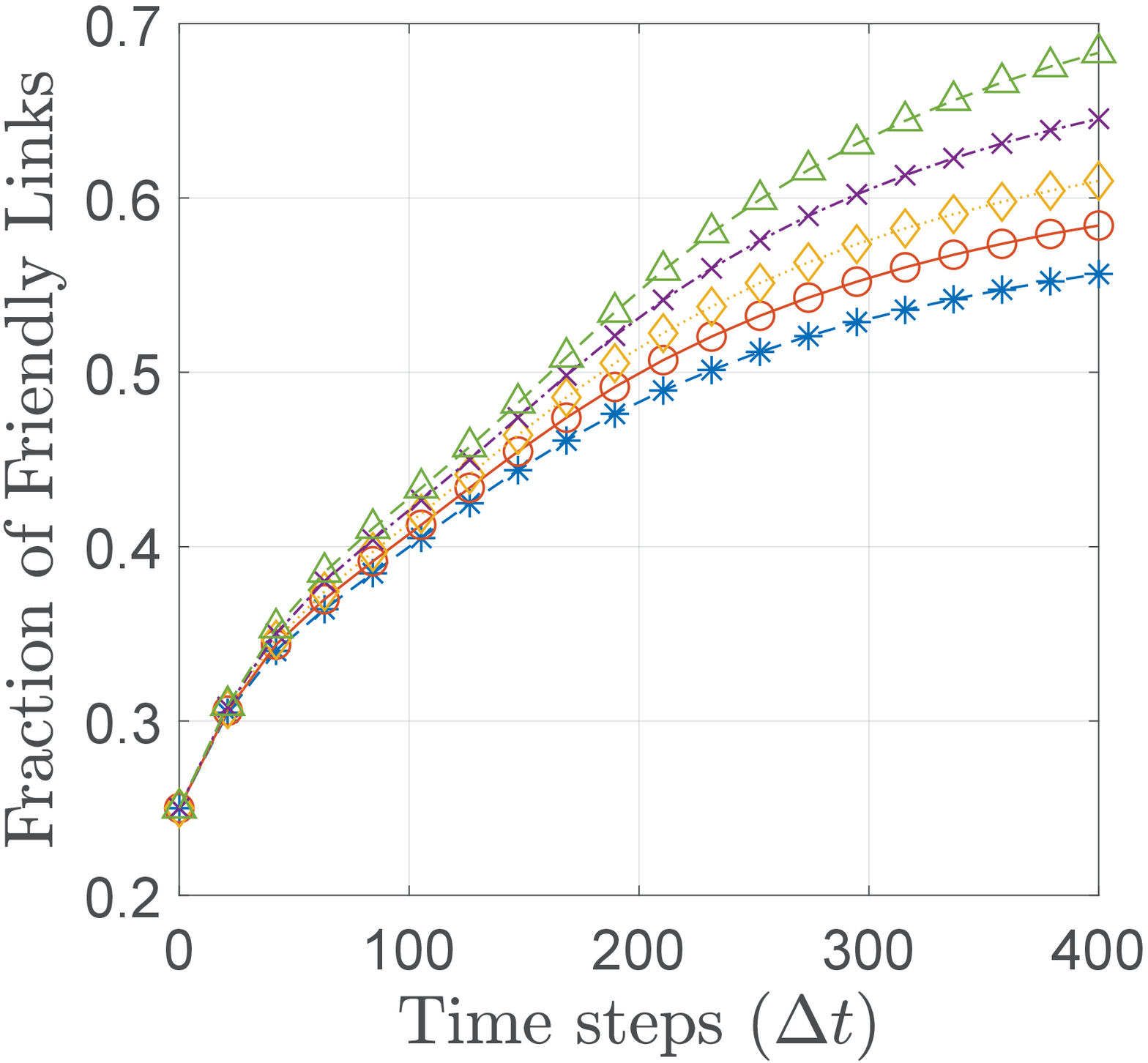}}
\vspace{-0.3em}
\caption{Time evolution of the proposed spreading model for different $(\beta, \kappa)$ values. The SIS baseline model is traced as ${(\beta=8, \kappa=0)}$.}
\vspace{-0.5em}
\label{fig5}
\end{figure}

\subsection{Case Study I: Results}
\label{sec5.2}
\fontdimen2\font=0.42ex
For the synthetic network, \figurename{~\ref{fig5}} plots the trajectories for a~viral outbreak under varying $(\beta,\kappa)$ values with $r_0\!=\!0.25$, $p_0 \!=\! 0.15$,~$\alpha \!=\! 0.5$,~$\beta_a \!=\! 0.3 \beta$,~and~$\delta \!=\! 9$ \cite{Faryad_2011, Sahneh_2019}. As shown in \figurename{~\ref{fig5a}}, unlike the SIS baseline ${(\kappa \!=\! 0)}$, the susceptible~fraction under the SAIS model drops to zero with rate proportional~to $\kappa$.\,As\,a\,result,\,the\,\,users\,\,are\,\,either\,\,influenced\,\,by\,the viral~process or alerted thus making them less likely to fall prey in the long run since $\beta_a \!\ll\! \beta$. The impact of $\kappa$ on the alerted\,and\,infected\,user\,densities\,are, respectively, shown~in \figurename{~\ref{fig5b}} and \figurename{~\ref{fig5c}}. For\,higher\,$\kappa$\,values, a larger~susceptible fraction\,is\,made\,aware of the spread which, in turn, diminishes the size of the infected population. For instance,~in~contrast to the baseline, \figurename{~\ref{fig5c}} vividly shows that the infected cluster size decreases by nearly $14\%$ when $\kappa \!=\! 2$. In spite of setting $\beta \!=\! 0$, note that there exists a non-zero infected population in the network. This clearly implies that for smaller values of $r_0$, the proposed model partitions the network into two distinct clusters: one comprising of alerted users and the other containing infected users. While the users within each cluster maintain a friendly relationship with each other, they are hostile towards users in the other cluster. The virality~of the process however, dies out gradually with rise in $r_0$ and the two clusters eventually merge into a single cluster of alert users. The impact of $\kappa$ on the number of friendly links is~shown in \figurename{~\ref{fig5d}}. Driven by the transitions given in \figurename{~\ref{fig4}}, such behavior is not far from expectation as users are inclined to detach from friends influenced by the spreading phenomena and instead, befriend those who are informed or share common interests to attain social stability.

\figurename{~\ref{fig6}} shows how the control parameter $\alpha$ arbitrates the epidemic spread and social tension trade-off for varying $r_0$ values in steady-state. In \figurename{~\ref{fig6a}}, we observe that the system gravitates towards the jammed states $(E(\mathcal{G}) \!<\! -1)$, where mitigating the epidemic is favored over attaining~structural balance, for lower $(\alpha, r_0)$ values. As $\alpha$ and $r_0$ goes beyond $0.5$ however, it is evident in \figurename{~\ref{fig6b}} that the network tends towards the global minimum energy state to become structurally robust at the expense of further epidemic spread. Interestingly, due to the reversible nature of the SAIS model, the fully balanced complete network progresses to be infection-free at $\alpha \!=\! r_0 \!=\! 1$ as all users eventually become aware~of the spread. Hence, the network tends to exploit the negative links to naturally immunize the susceptible users by separating them from the cluster of infected users. Therefore, for any~given setting, an optimal $(\alpha, r_0)$ pair exists for which~the network would contain minimum number of infected users in steady-state and yet, not necessarily be socially balanced.
\begin{figure}[!t]
\centering
\subfloat[$\rho_{\infty}$ vs. $(\alpha,r_0)$]{\label{fig6a}\includegraphics[width=1.52in]{./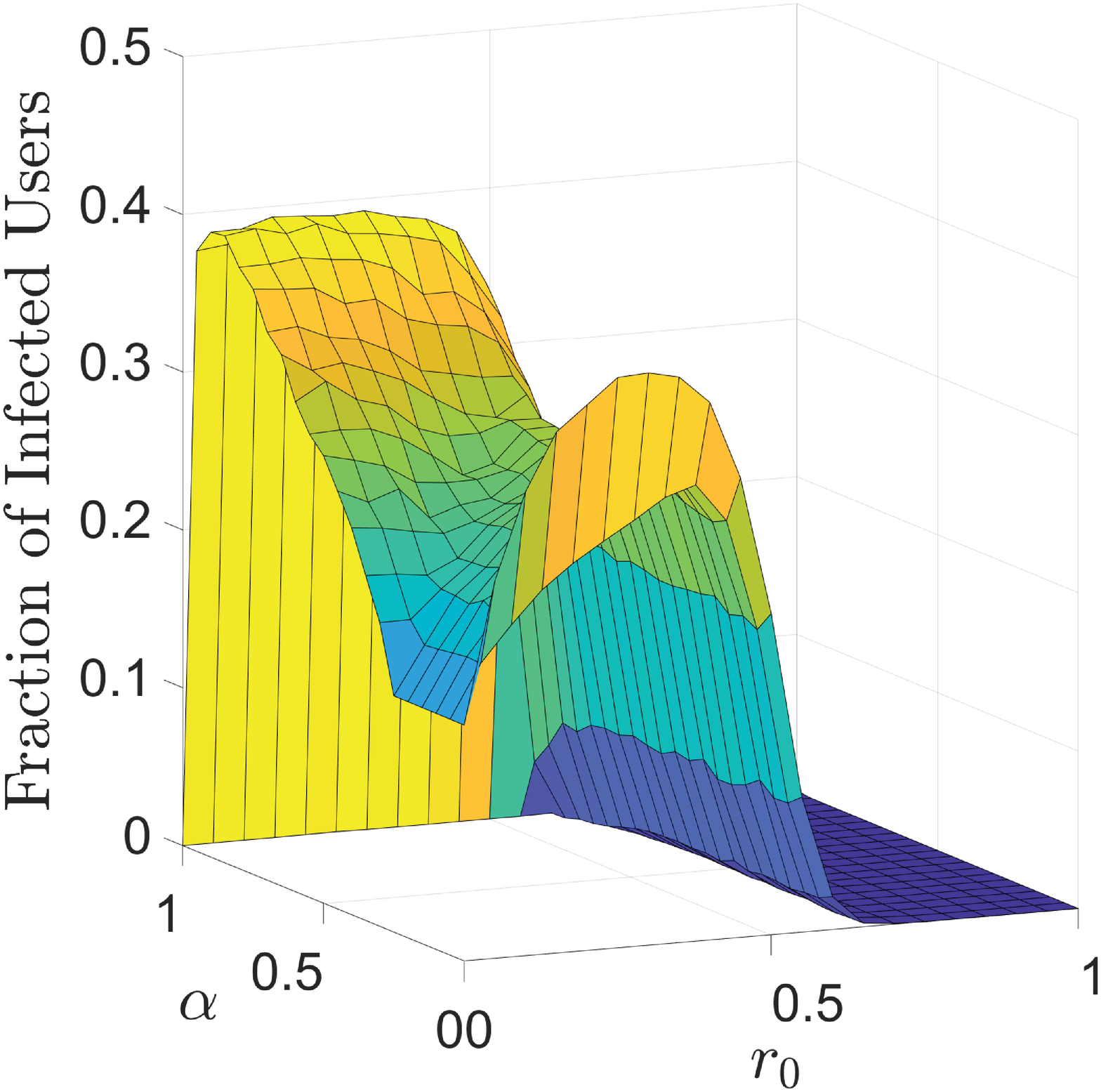}}
\,\,\,
\subfloat[$E(\mathcal{G})$ vs. $(\alpha,r_0)$]{\label{fig6b}\includegraphics[width=1.62in]{./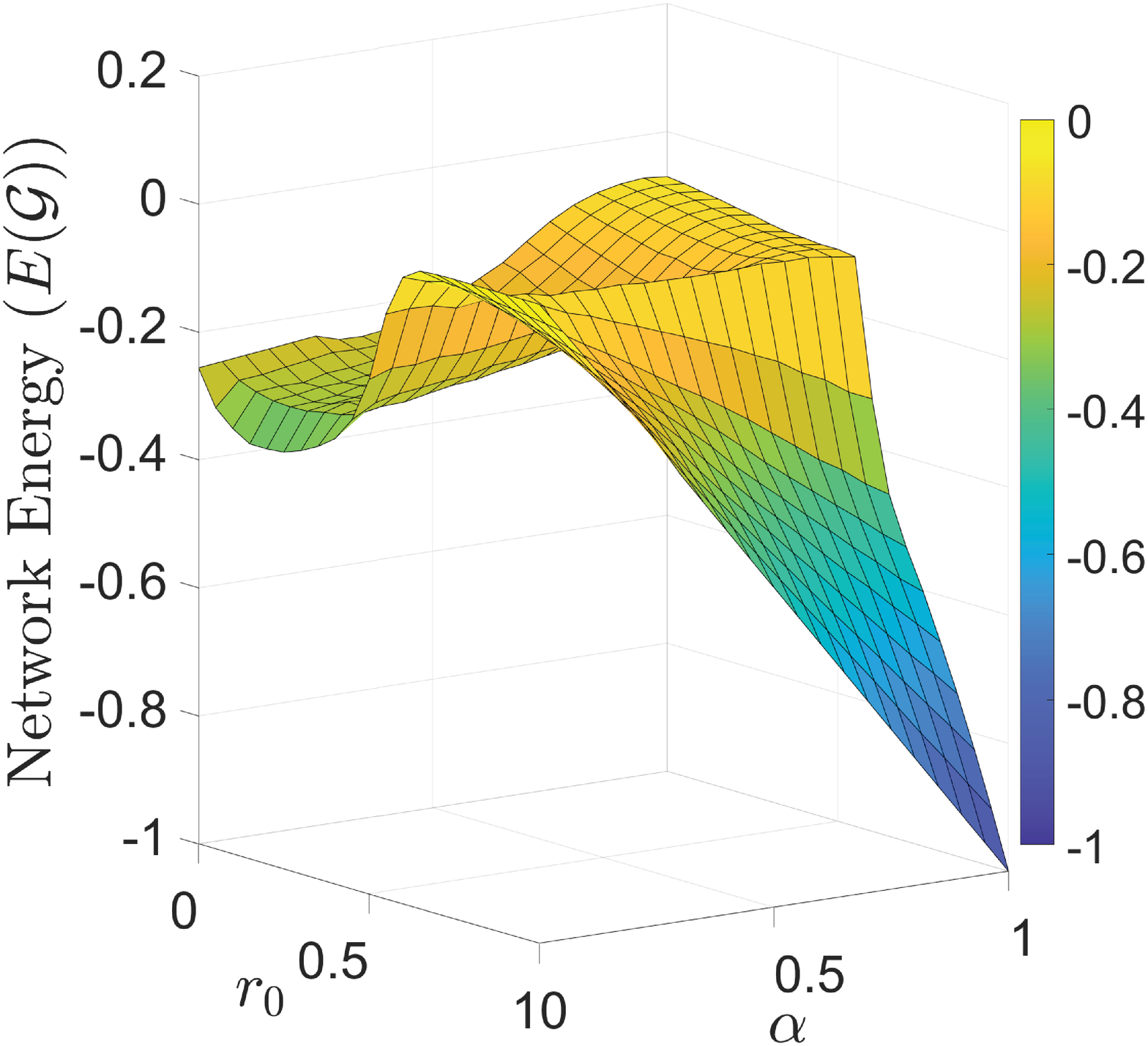}}
\vspace{-0.4em}
\caption{Impact of $\alpha$ and $r_0$ on the steady-state infection density and the network energy for $p_0 \!=\! 0.15$, $\beta \!=\! 6$, $\kappa \!=\! 4$, $\beta_a \!=\! 0.3 \beta$, and $\delta \!=\! 9$.}
\label{fig6}
\vspace{-0.7em}
\end{figure}
\begin{figure}[t]
\centering
\subfloat[$\rho_{\infty}$ vs. $(\rho_0,r_0)$]{\label{fig7a}\includegraphics[width=1.6in]{./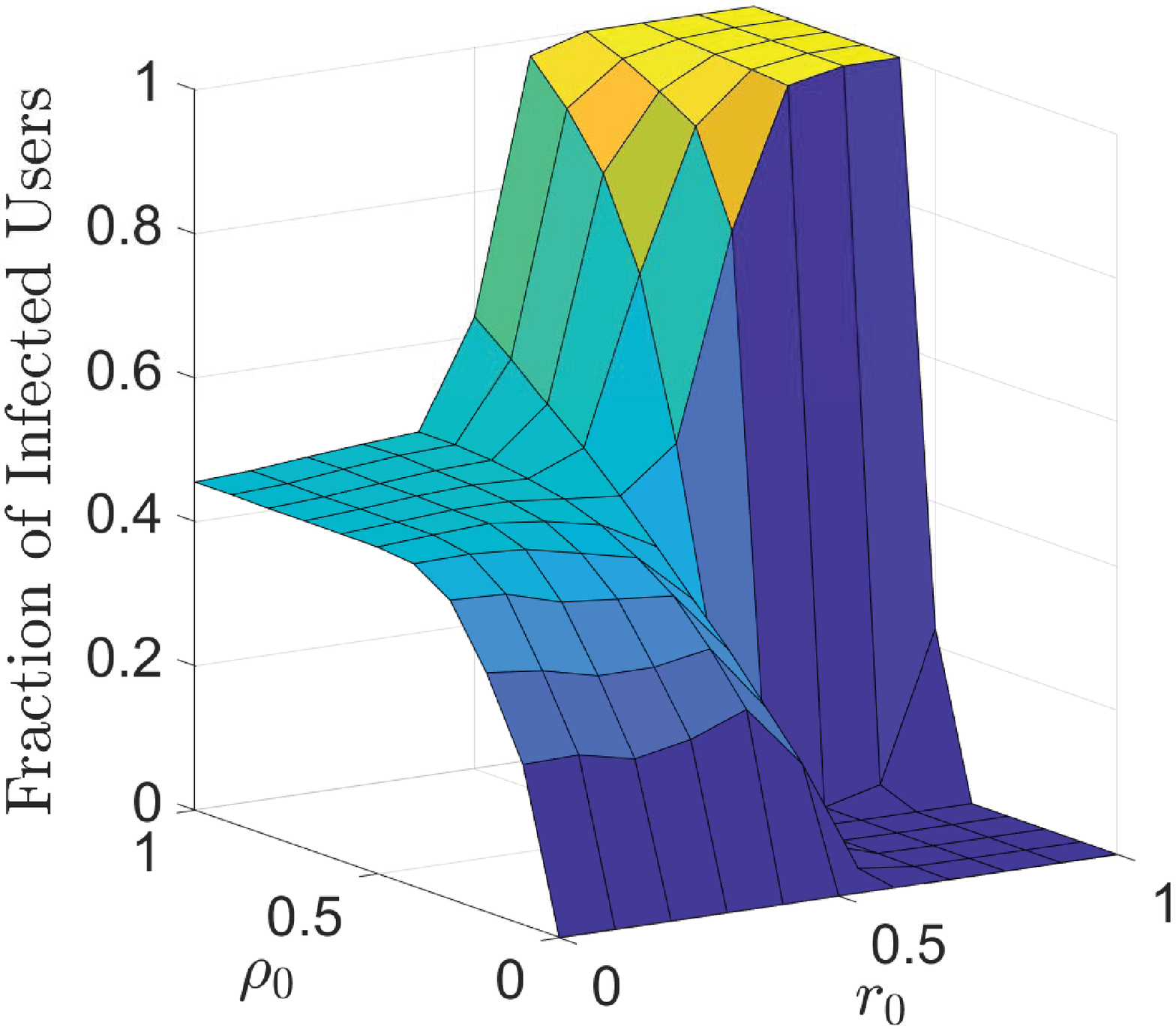}}
\,
\subfloat[$E(\mathcal{G})$ vs. $(\rho_0,r_0)$]{\label{fig7b}\includegraphics[width=1.69in]{./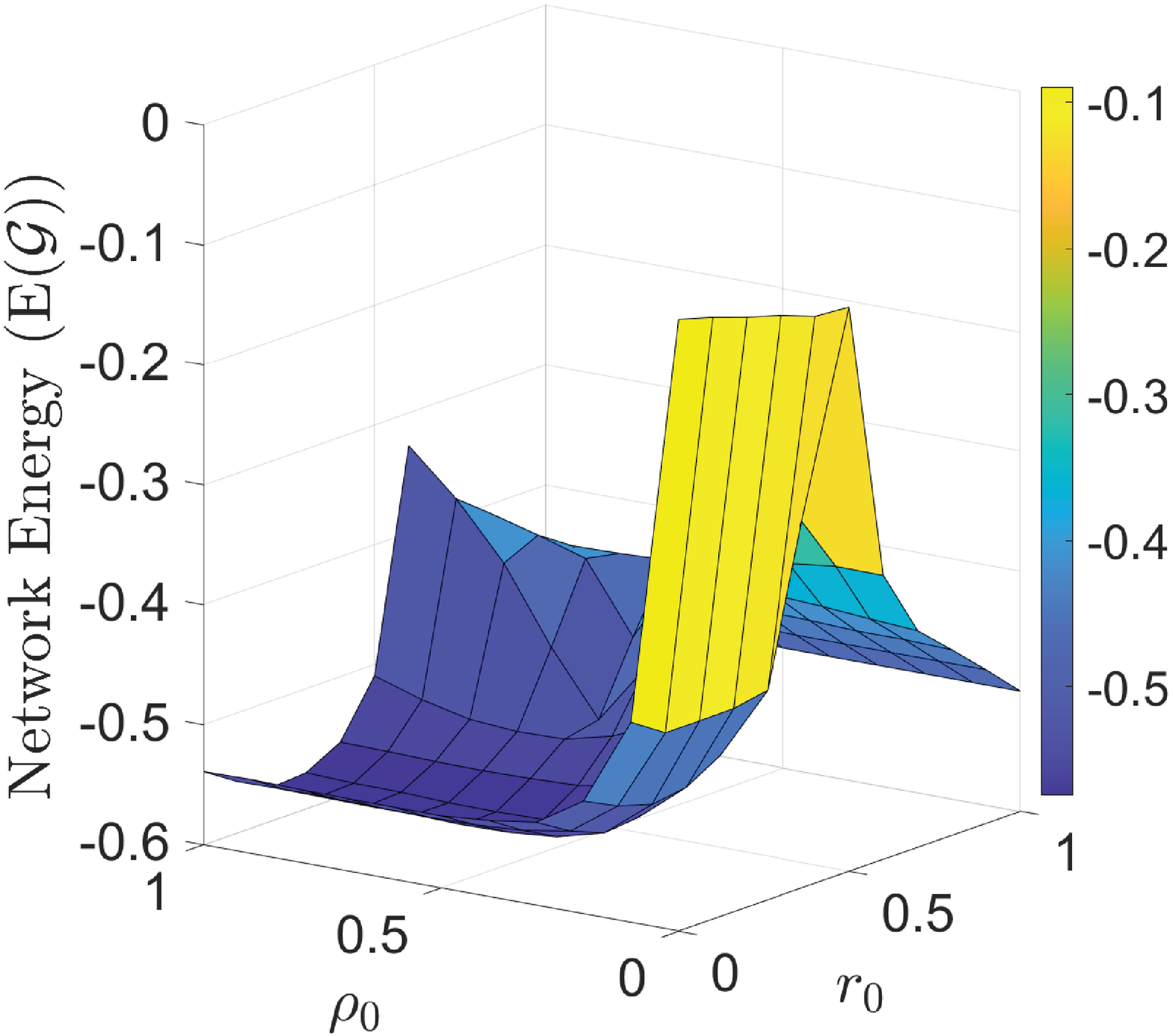}}
\vspace{-0.5em}
\caption{Impact of $\rho_0$ and $r_0$ on the steady-state infection density and the network energy for $\alpha \!=\! 0.5$, $\beta \!=\! 6$, $\kappa \!=\! 4$, $\beta_a \!=\! 0.3 \beta$, and $\delta \!=\! 9$.}
\vspace{-1em}
\label{fig7}
\end{figure}

The significance of $\rho_0$ in the coupled evolution is shown in \figurename{~\ref{fig7}}. For $0 \!\leq\! \rho_0, r_0 \!\leq\! 0.5$, \figurename{~\ref{fig7a}} showcases the impact of negative links in controlling the contagious spread. Further increase in $\rho_0$ however, yields a fixed fraction of infected~users as most of the triads have evolved into a balanced state.~Full recovery is attained when $\rho_0 \!\leq\! 0.5$ and $r_0 \!>\! 0.5$ due to the small infection prevalence and the low infection rate relative to $\kappa$ and $\delta$. Also, note that for low $\rho_0$, the stationary infection density ($\rho_{\infty}$) drops to zero, irrespective of $r_0$. The process diffuses at its maximum when $\rho_0$ and $r_0$ are~both high. \figurename{~\ref{fig7b}} shows the trivial effect of $\rho_0$ on the net network energy for high $r_0$, where the triad structural energy is dominant. Around $r_0 \!=\! 0.5$, the network struggles to become balanced as flipping the edge sign conceivably creates more unbalanced triads as compared to other values on the $r_0$ spectrum. \figurename{~\ref{fig8}} shows $E_p(\mathcal{G})$ and $E_{\triangle}(\mathcal{G})$ against $r_0$ for different $(\beta, \kappa)$ values in steady-state. As seen in \figurename{~\ref{fig8a}}, the minimum achievable pairwise and triad energy values increase with $\kappa$ for $r_0 \!<\! 0.6$, whereas the $\rho_{\infty}$ inversely drops. But for larger $r_0$, the impact of $\beta$ and $\kappa$ on the energy and $\rho_{\infty}$ is minimal. That is to say, $E_p(\mathcal{G})$ in \figurename{~\ref{fig8a}} increases to zero as the number of infected users becoming aware rapidly grows with increase in friendly links. Contrarily, $E_\triangle(\mathcal{G})$ in \figurename{~\ref{fig8b}} falls to the~global energy minimum because the triads gradually evolve to become~socially stable as most users have already formed friendly links.\vspace{-0.9em}

\begin{figure}[!t]
\centering
\subfloat[Pairwise energy vs. $r_0$]{\label{fig8a}\includegraphics[width=1.54in]{./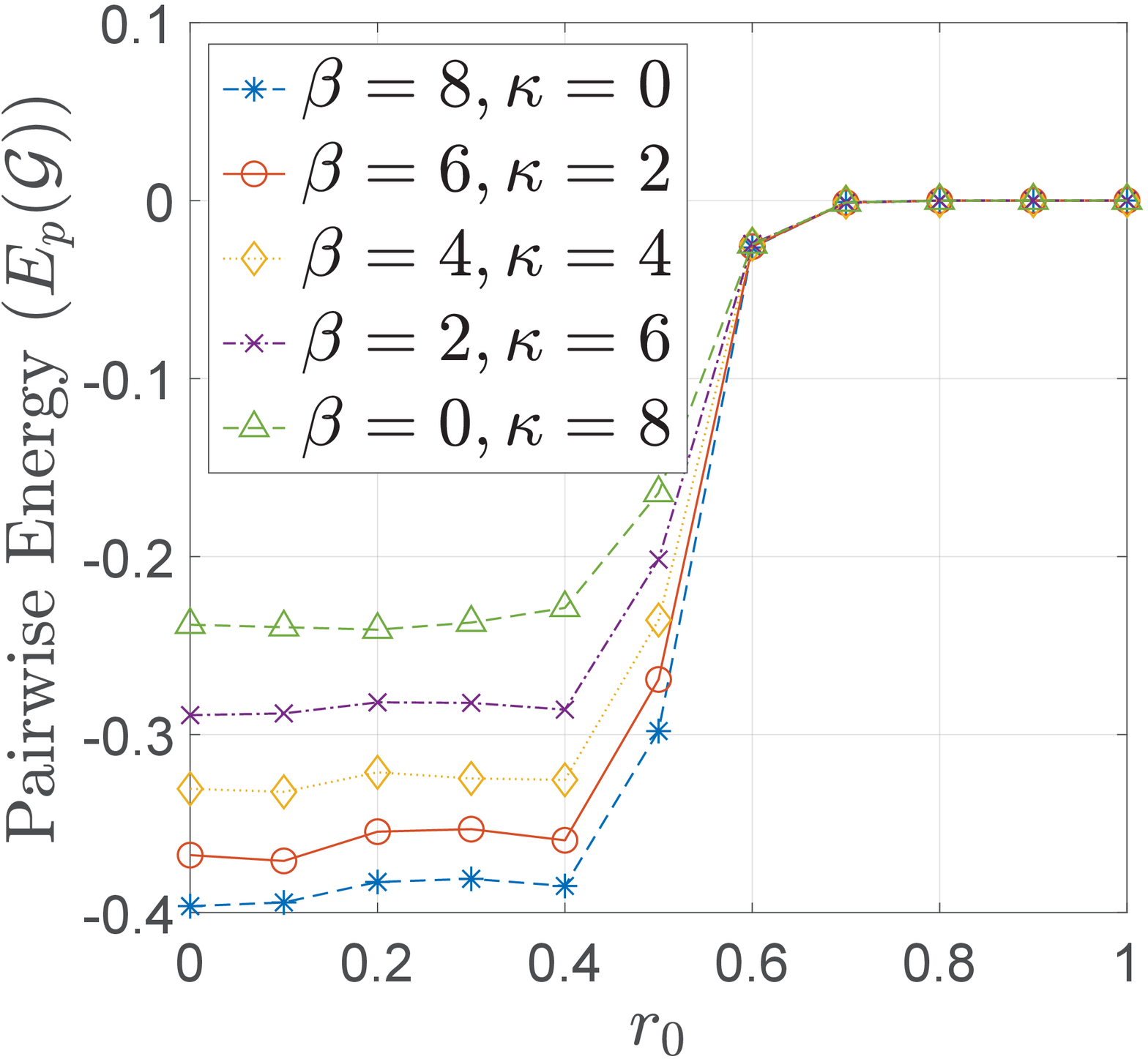}}
~~
\subfloat[Triad energy vs. $r_0$]{\label{fig8b}\includegraphics[width=1.54in]{./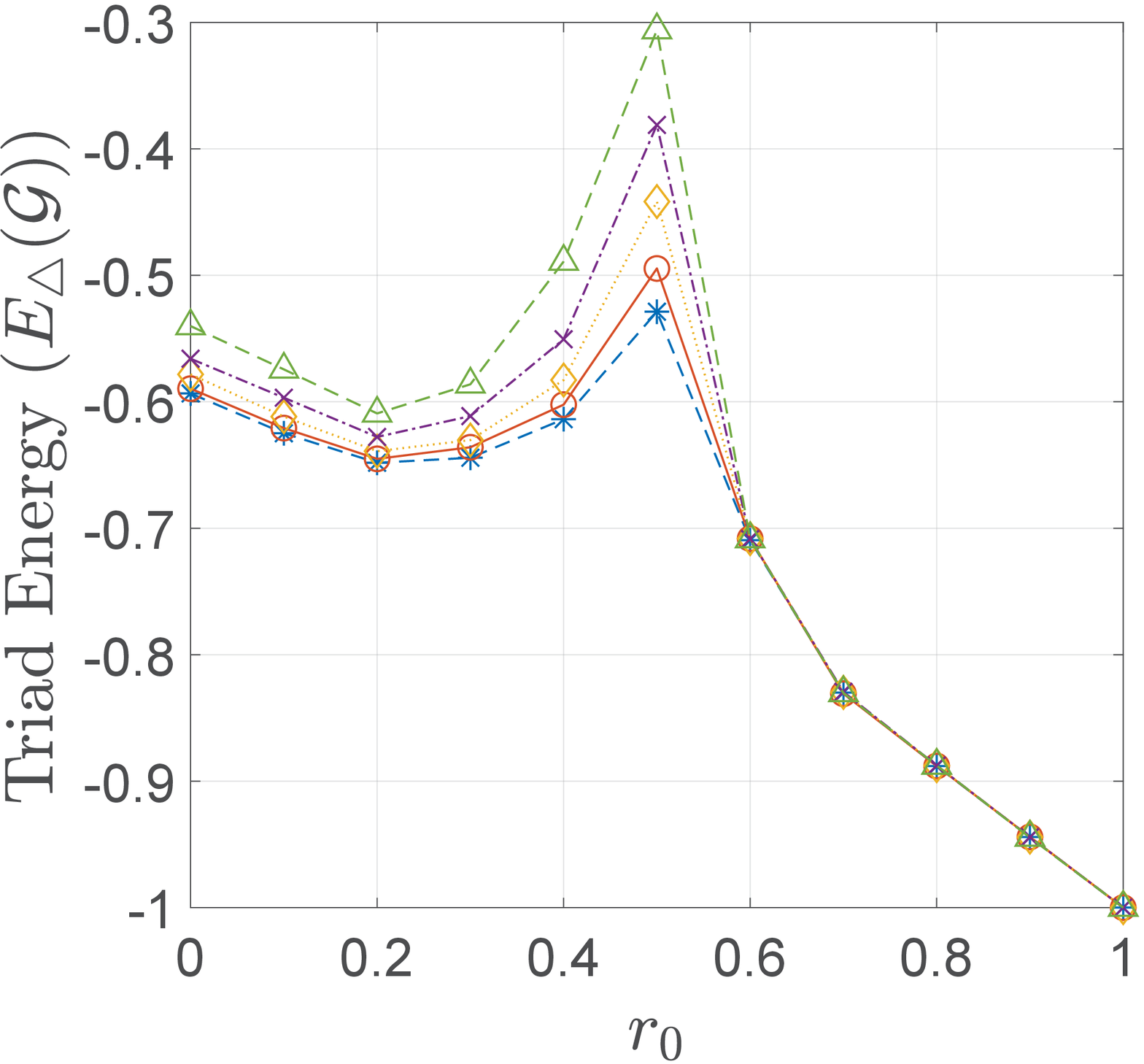}}
\vspace{-0.4em}
\caption{Minimum pairwise and triad energy for different $(\beta, \kappa)$ values. Here, $p_0=0.15$, $\alpha=0.5$, $\beta_a=0.3\beta$, and $\delta=9$.}
\vspace{-0.4em}
\label{fig8}
\end{figure}
\subsection{Case Study II: Results}
\label{sec5.3}
\fontdimen2\font=0.42ex
We now assess our model using the real sparsely-connected SL $(\approx \!1.10\%)$, BC $(\approx \!0.93\%)$, and the dense CS $(\approx\! 74.67\%)$ SN datasets. The total number of triads existing in SL, BC, and CS are $4268$, $1588$, and $74140$,~respectively\footnote{In general, the sharp upper bound on the number of triads in a graph with $n$ nodes and $m$ edges is $\frac{n}{6}(2m-n+1)^{3/2}.$}. Table~\ref{table1} compares the impact of network scalability on the model performance.
\begin{table}[t]
  \renewcommand{\arraystretch}{1.2}
	\caption{Performance under network scalability for $\rho_0 \!=\! 0.15$ and $(\beta,\kappa) \!=\! (4,4)$}
	\vspace{-1.4em}
	\label{table1}
	\centering
  \begin{tabular}{|c|c|c|c|c|c|c|c|}
    \hline
     \multicolumn{1}{|c|}{}  & $|\mathcal{V}|$ & Density & $\rho_{\infty}$ & $a_{\infty}$ & $r_{\infty}$ & $|\triangle_B|$\\\hline
  \multirow{3}{*}{\rotatebox{90}{SL}}  & $200$ & $\approx 4.37\%$ & $0.22$ & $0.32$ & $0.43$ & $0.987$ \\\cline{2-7}    
   & $498$ & $\approx 1.20\%$ & $0.16$ & $0.16$ & $0.38$ & $0.993$ \\ \cline{2-7}
   & $747$ & $\approx 1.10\%$ & $0.17$ & $0.11$ & $0.37$ & $0.936$ \\
   \hline  \hline
   \multirow{3}{*}{\rotatebox{90}{BC}}  & $161$ & $\approx 4.25\%$ & $0.22$ & $0.39$ & $0.41$ & $0.984$ \\\cline{2-7}    
   & $460$ & $\approx 1.46\%$ & $0.193$ & $0.19$ & $0.39$ & $0.928$ \\ \cline{2-7}
   & $709$ & $\approx 0.93\%$ & $0.17$ & $0.11$ & $0.38$ & $0.906$ \\
   \hline  \hline
   \multirow{1}{*}{\rotatebox{90}{CS}}  & $100$ & $\approx 74.67\%$ & $0.21$ & $0.79$ & $0.66$ & $0.993$ \\
   \hline  
  \end{tabular}
  \vspace{-1em}
\end{table}
\begin{figure*}[t]
\centering
\subfloat[$\rho_{\infty}$ vs. $(\beta,\kappa)$]{\label{fig9a}\includegraphics[width=1.4in]{./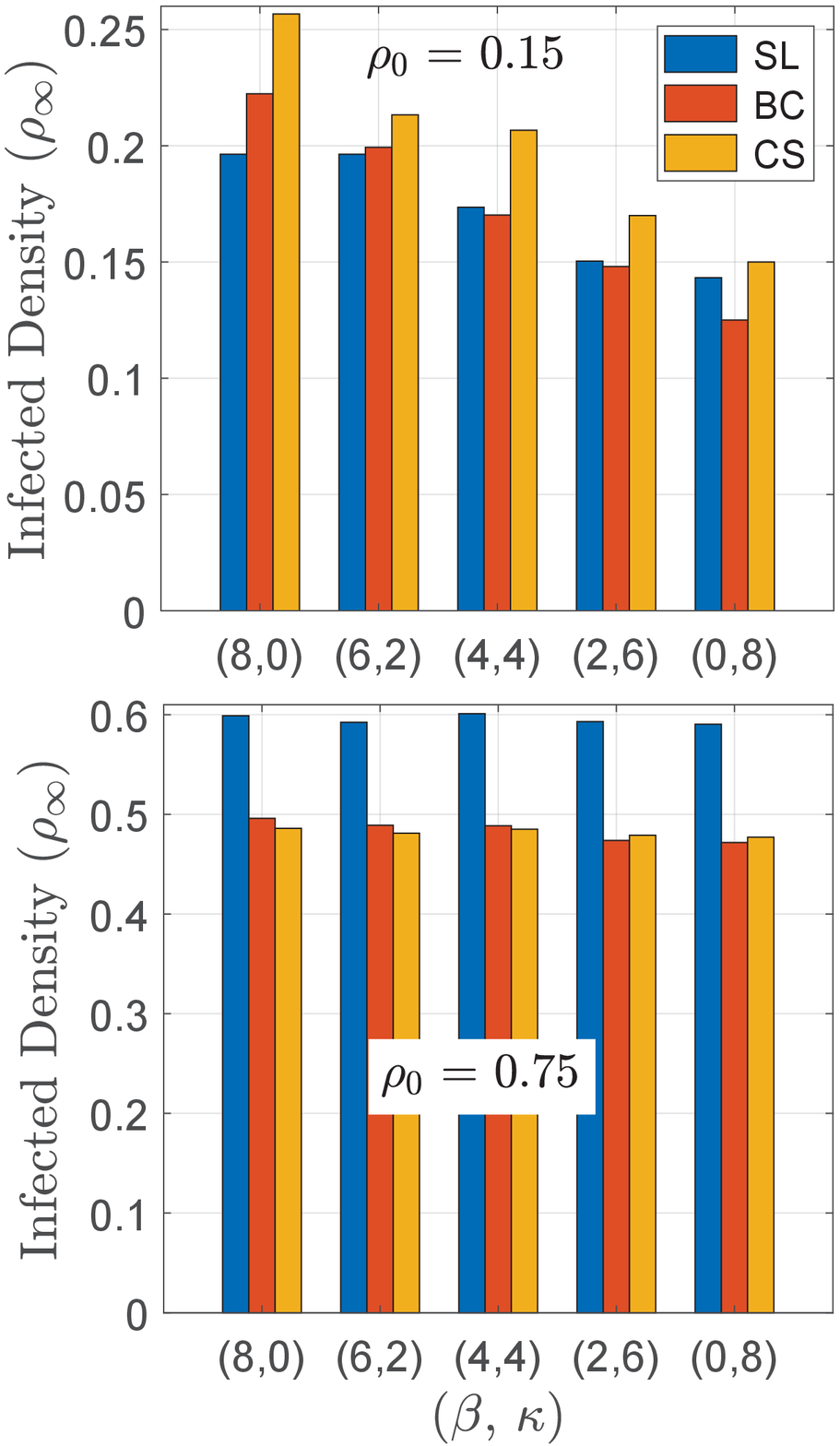}}
\,
\subfloat[$a_{\infty}$ vs. $(\beta,\kappa)$]{\label{fig9b}\includegraphics[width=1.352in]{./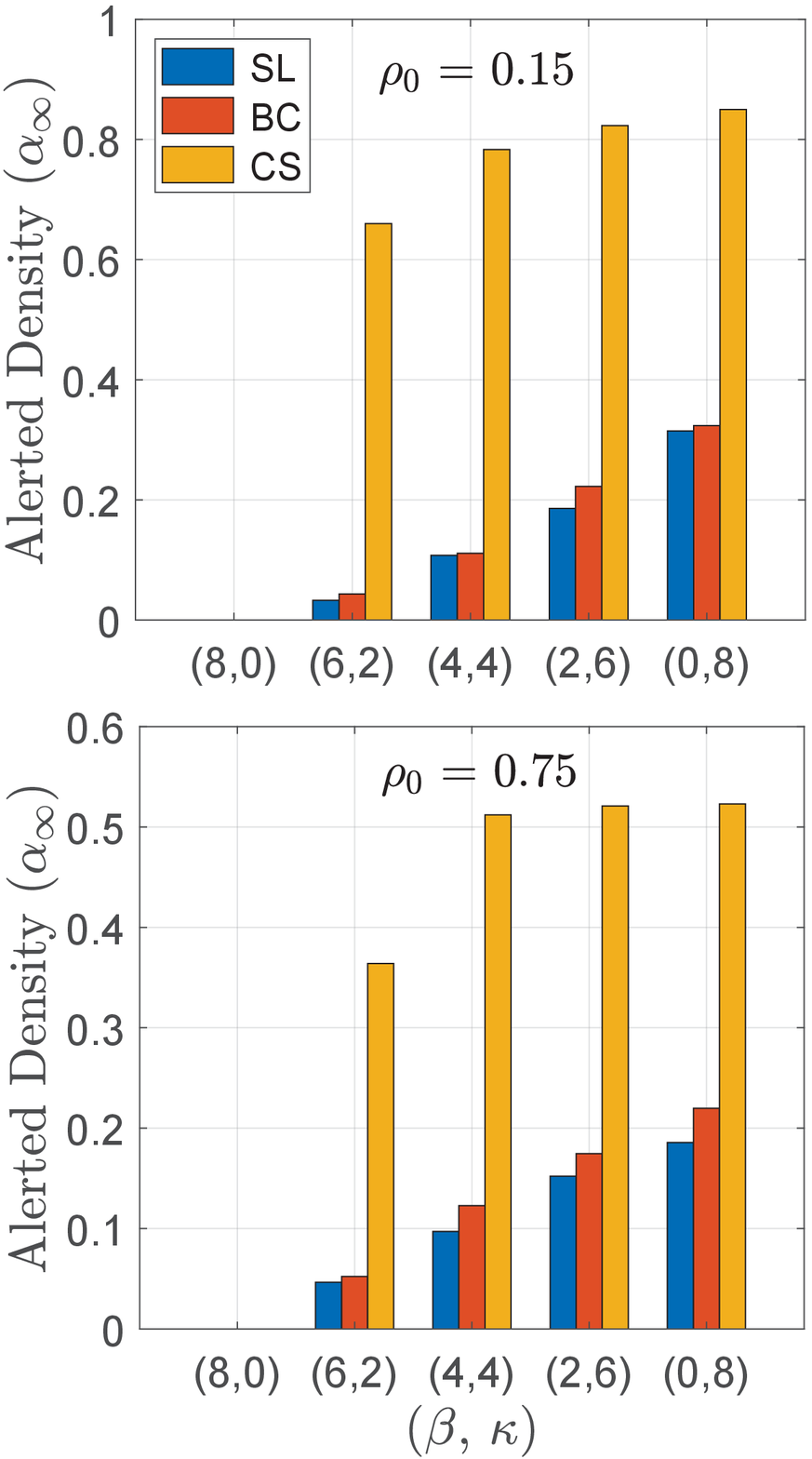}}
\,
\subfloat[$r_{\infty}$ vs. $(\beta,\kappa)$]{\label{fig9c}\includegraphics[width=1.34in]{./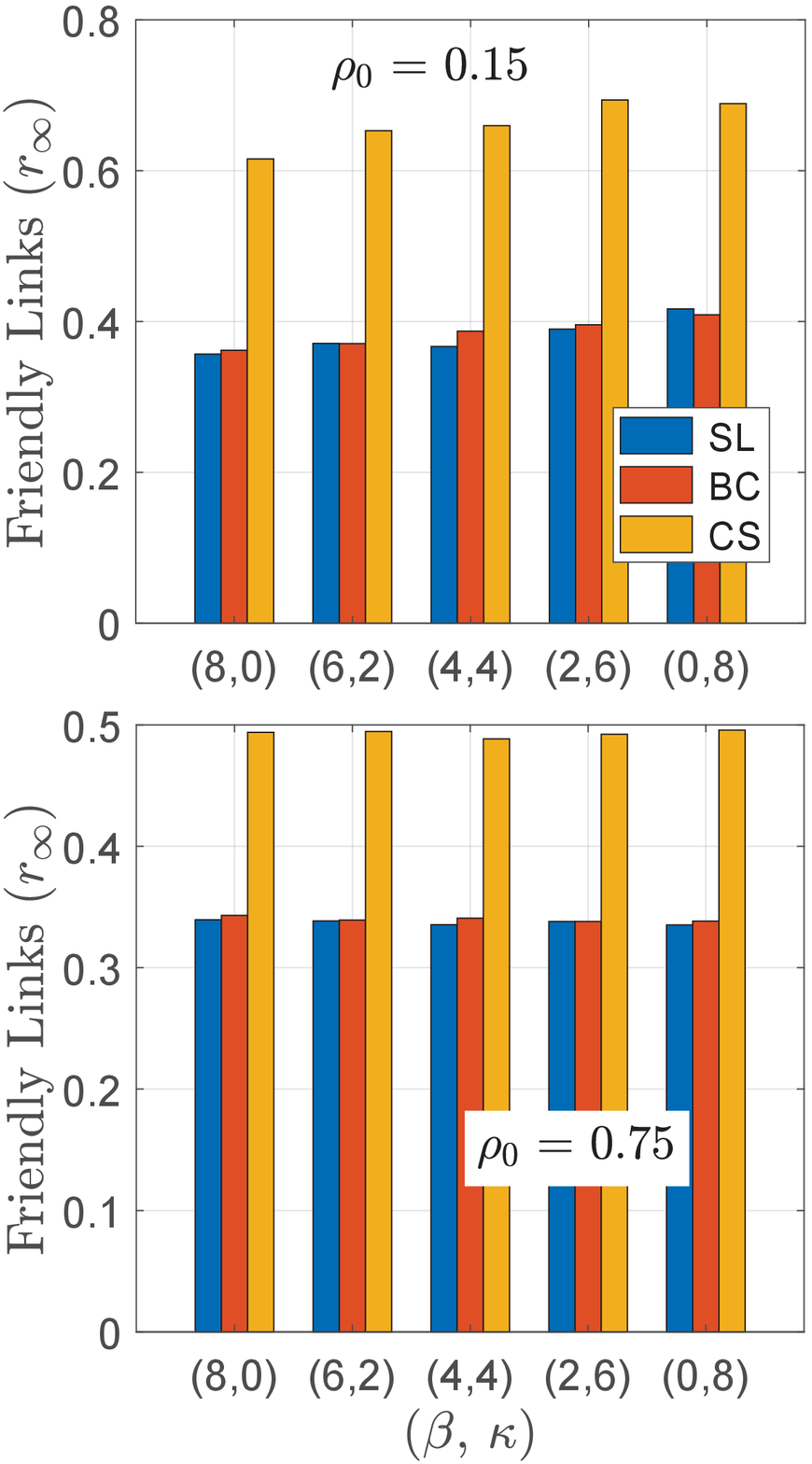}}
\,
\subfloat[$|\triangle_B|$ vs. $(\beta,\kappa)$]{\label{fig9d}\includegraphics[width=1.352in]{./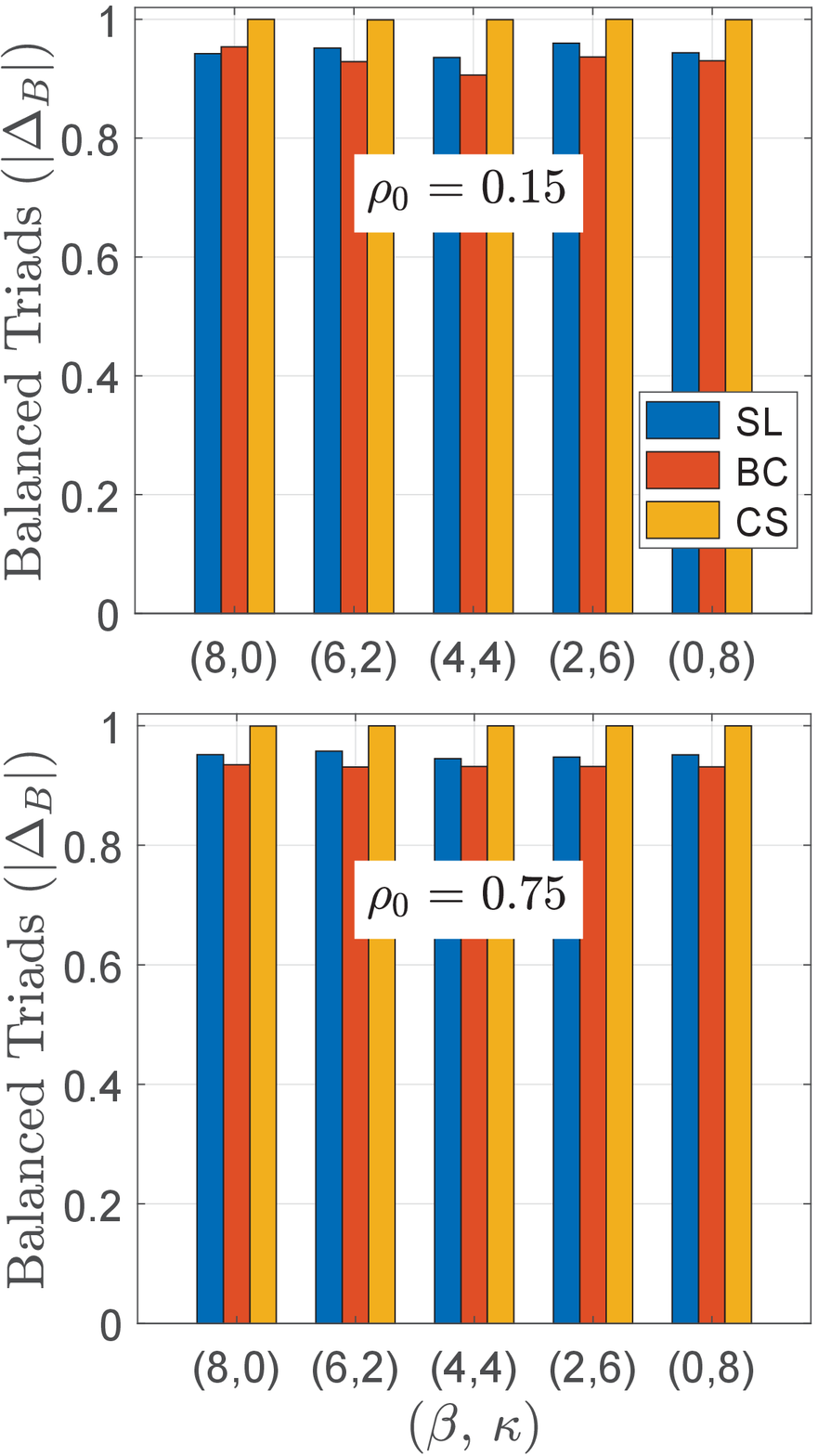}}
\,
\subfloat[$E(\mathcal{G})$ vs. $(\beta,\kappa)$]{\label{fig9e}\includegraphics[width=1.356in]{./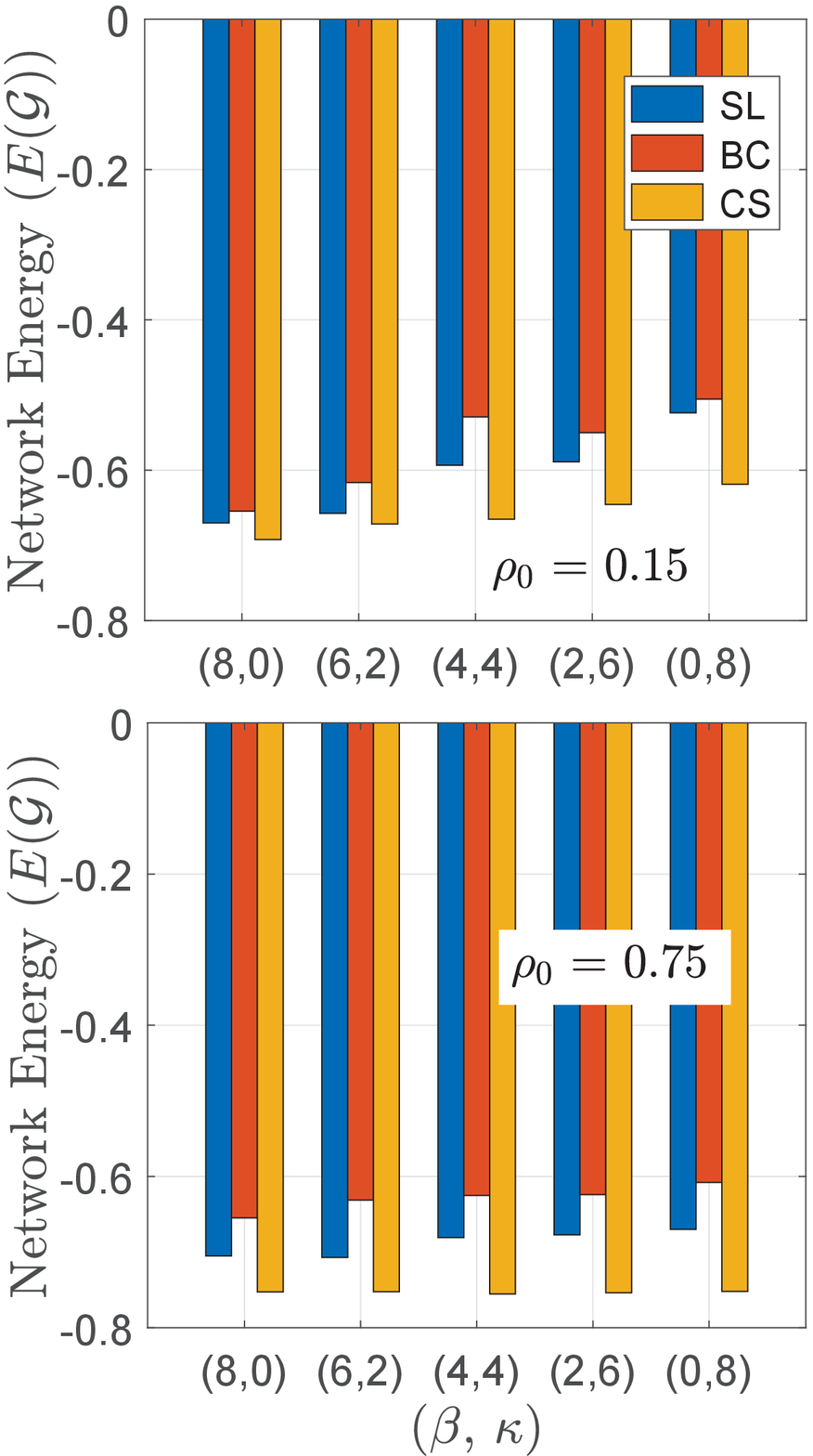}}
\vspace{-0.4em}
\caption{Experimental results on SL, BC, and CS for varying $(\beta,\kappa)$ values. Here, $r_0=0.25$, $p_0=\{0.15,0.75\}$, $\alpha=0.5$, $\beta_a=0.3\beta$, and $\delta=9$.}
\vspace{-0.7em}
\label{fig9}
\end{figure*}

\figurename{~\ref{fig9}} depicts the steady-state results of the datasets with respect to generic $\beta$ and $\kappa$ values. Consistent with \figurename{~\ref{fig6a}}~and \figurename{~\ref{fig7a}}, we see in \figurename{~\ref{fig9a}} that $\rho_{\infty}$ reduces with rise in~awareness for $\rho_0 \!=\!0.15$, while it remains high and almost the~same for $\rho_{0} \!=\! 0.75$. As a resultant, the fraction of alerted users~increases with $\kappa$ as shown in \figurename{~\ref{fig9b}}. It is noteworthy that,~irrespective of $\rho_0$, the gap in $a_{\infty}$ between CS and the other two datasets gradually decreases. Despite the fewer nodes in CS, the~plots reveal the profound impact of $\kappa$ on the nodal states of densely connected CS that contains more number of triads. Summing up the infected and alerted fractions for each ($\beta, \kappa$) pair also justifies the natural immunization induced by the formation of two-cluster networks. Changes in $\beta$ and $\kappa$ however, do~not seem to influence the fraction of friendly links ($r_{\infty}$) in steady-state. Though the epidemic state of nodes are decisive in~link sign evolution, we see that for all scenarios the ultimate~number of friendly ties is almost the same. Upon reaching steady-state, the fraction of balanced triads, denoted by $|\triangle_B|$, is close to $100\%$ in \figurename{~\ref{fig9d}}, which indicates a balanced SN structure in accordance with Heider's balance theory. Furthermore, the results empirically show that although the structural energy of triads approach a global minimum, $E(\mathcal{G})$ does not reach the global energy minimum state. This is because the pairwise energy tends towards a local energy minimum state when $\alpha = 0.5$. Also, since $\kappa$ affects the pairwise spreading energy, $E(\mathcal{G})$ is slightly lower for higher $\kappa$ rates as plotted in \figurename{~\ref{fig9e}}. Comparing the trends in both rows of \figurename{~\ref{fig9}}, the results for $r_{\infty}$ and $|\triangle_B|$ are almost alike. With increase in $\kappa$, $E(\mathcal{G})$ of CS is higher by roughly $28\%$ for $\rho_0=0.15$ because of the larger population of alerted users being intrinsically immunized as compared to the setting with $\rho_0 = 0.75$, where the impact of increasing $\kappa$ is relatively inconsequential.\vspace{-0.8em}

\section{Conclusion}
\label{sec6}
\fontdimen2\font=0.42ex
In this paper, coupled dynamics of the SAIS epidemic model and the structural evolution of SSNs was studied. Inspired by Heider's balance theory, a network energy framework~was formulated to capture the viral spreading via pairwise user~interactions in conjunction with social stability in triad configurations. The superiority of incorporating user awareness in the classical SIS model was fully validated~by the Monte Carlo simulation results. Moreover, it was shown that a complete SSN splits into two clusters of alerted and infected users~upon reaching a local energy minimum. The alerted cluster density was also found to grow with increase in the initial number of friendly links and a fully balanced SSN becomes infection-free only when the triad energy is considered and all initial user links are friendly. One interesting future work is to leverage network centrality measures other than degree distribution in probing the trade-off between opposing epidemics and social stability in directed and composite SSNs. Better~parameter estimations based on network spectral analysis is also of value in adopting effective control strategies.\vspace{-0.8em}
\ifCLASSOPTIONcompsoc
  \section*{Acknowledgments}
\else
  \section*{Acknowledgment}
\fi

\fontdimen2\font=0.5ex
This research was supported by the Faculty Development Competitive Research Grant (No.\,240919FD3918), NU.\vspace{-0.4em}

\ifCLASSOPTIONcaptionsoff
  \newpage
\fi



\bibliographystyle{IEEEtran}
\bibliography{IEEEabrv,reference}

\begin{thebibliography}{10}
\providecommand{\url}[1]{#1}
\csname url@samestyle\endcsname
\providecommand{\newblock}{\relax}
\providecommand{\bibinfo}[2]{#2}
\providecommand{\BIBentrySTDinterwordspacing}{\spaceskip=0pt\relax}
\providecommand{\BIBentryALTinterwordstretchfactor}{4}
\providecommand{\BIBentryALTinterwordspacing}{\spaceskip=\fontdimen2\font plus
\BIBentryALTinterwordstretchfactor\fontdimen3\font minus
  \fontdimen4\font\relax}
\providecommand{\BIBforeignlanguage}[2]{{%
\expandafter\ifx\csname l@#1\endcsname\relax
\typeout{** WARNING: IEEEtran.bst: No hyphenation pattern has been}%
\typeout{** loaded for the language `#1'. Using the pattern for}%
\typeout{** the default language instead.}%
\else
\language=\csname l@#1\endcsname
\fi
#2}}
\providecommand{\BIBdecl}{\relax}
\BIBdecl

\bibitem{Britton_2020}
T.~Britton, ``Epidemic models on social networks{\textemdash}with inference,''
  \emph{Statistica Neerlandica}, vol.~74, no.~3, pp. 222--241, Feb. 2020.

\bibitem{Huang_2020}
D.-W. {Huang}, L.-X. {Yang}, P.~{Li}, X.~{Yang}, and Y.~Y. {Tang}, ``Developing
  cost-effective rumor-refuting strategy through game-theoretic approach,''
  \emph{IEEE Syst. J.}, pp. 1--12, Sep. 2020.

\bibitem{Dabarov_2020}
A.~{Dabarov}, M.~{Sharipov}, A.~{Dadlani}, M.~S. {Kumar}, W.~{Saad}, and C.~S.
  {Hong}, ``Heterogeneous projection of disruptive malware prevalence in mobile
  social networks,'' \emph{IEEE Commun. Lett.}, vol.~24, no.~8, pp. 1673--1677,
  Aug. 2020.

\bibitem{Shafiei_2020}
H.~Shafiei, A.~Khonsari, S.~Kashipazha, and A.~Dadlani, ``The tongue can paint:
  Understanding popularity in podcasting societies,'' in \emph{Proc. {ACM} Web
  Conference}, Apr. 2020.

\bibitem{Leskovec_2010}
J.~{Leskovec}, D.~{Huttenlocher}, and J.~{Kleinberg}, ``Signed networks in
  social media,'' in \emph{Proc. ACM SIGCHI Conf. Human Factors Comput. Syst.},
  2010, pp. 1361--1370.

\bibitem{Heider_1958}
F.~Heider, \emph{The psychology of interpersonal relations.}\hskip 1em plus
  0.5em minus 0.4em\relax \!Hoboken, NJ, US: John Wiley \& Sons Inc, 1958.

\bibitem{Sun_2020}
R.~Sun, C.~Chen, X.~Wang, Y.~Zhang, and X.~Wang, ``Stable community detection
  in signed social networks,'' \emph{{IEEE} Trans. Knowl. Data Eng.}, pp. 1--1,
  2020.

\bibitem{Hedayatifar_2017}
L.~Hedayatifar, F.~Hassanibesheli, A.~Shirazi, S.V. Farahani, and G.~Jafari,
  ``Pseudo paths towards minimum energy states in network dynamics,''
  \emph{Physica A}, vol. 483, pp. 109--116, Oct. 2017.

\bibitem{Altafini_2013}
C.~{Altafini}, ``Consensus problems on networks with antagonistic
  interactions,'' \emph{IEEE Trans. Autom. Control}, vol.~58, no.~4, pp.
  935--946, Apr. 2013.

\bibitem{Shi2015}
G.~{Shi}, A.~{Proutiere}, M.~{Johansson}, J.~S. {Baras}, and K.~H. {Johansson},
  ``Emergent behaviors over signed random dynamical networks: State-flipping
  model,'' \emph{IEEE Trans. Control Netw. Syst.}, vol.~2, no.~2, pp. 142--153,
  Jun. 2015.

\bibitem{Saeedian_2017}
M.~{Saeedian}, N.~{Azimi-Tafreshi}, G.~R. {Jafari}, and J.~{Kertesz},
  ``Epidemic spreading on evolving signed networks,'' \emph{Phys. Rev. E},
  vol.~95, p. 022314, Feb. 2017.

\bibitem{Lee_2019}
H.-W. Lee, N.~Malik, F.~Shi, and P.~J. Mucha, ``Social clustering in epidemic
  spread on coevolving networks,'' \emph{Physical Review E}, vol.~99, no.~6,
  Jun. 2019.

\bibitem{Zhao2017}
Z.~Zhang, Y.~Shi, J.~Willson, D.-Z. Du, and G.~Tong, ``Viral marketing with
  positive influence,'' in \emph{Proc. IEEE Conf. on Comp. Commun. (INFOCOM)},
  2017, pp. 1--8.

\bibitem{Li_2021}
D.~Li and J.~Liu, ``Modeling influence diffusion over signed social networks,''
  \emph{{IEEE}\,Trans.\,Knowl.\,Data\,Eng.}, vol.~33, no.~2, pp. 613--625, Feb.
  2021.

\bibitem{Faryad_2011}
F.~{Darabi Sahneh} and C.~{Scoglio}, ``Epidemic spread in human networks,'' in
  \emph{Proc. IEEE Conf. Decis. Control and Eur. Control Conf. (CDC-ECC)},
  2011, pp. 3008--3013.

\bibitem{Sahneh_2019}
F.~D. Sahneh, A.~Vajdi, J.~Melander, and C.~M. Scoglio, ``Contact
  adaption\,during\,epidemics:\,{A}\,multilayer network\,formulation
  approach,'' \emph{\!{IEEE}\,Trans.\,Netw.\,Sci.\,Eng.}, vol. \!6, no. \!1,
  pp. \!16--30, Jan. 2019.

\bibitem{Tang_2016}
J.~Tang, Y.~Chang, C.~Aggarwal, and H.~Liu, ``A survey of signed network mining
  in social media,'' \emph{{ACM} Comput. Surv.}, vol.~49, no.~3, pp. 1--37,
  Dec. 2016.

\bibitem{snapnets}
J.~Leskovec and A.~Krevl, ``{SNAP Datasets}: {Stanford} large network dataset
  collection,'' \url{http://snap.stanford.edu/data}, Jun. 2014.

\bibitem{Neal2020}
\BIBentryALTinterwordspacing
Z.~Neal, ``{A\,Sign\,of\,the\,Times: Dataset\,of\,US\,Congress signed~network
  backbones from co-sponsorship data, 1973-2016},'' Jan. 2020. [Online].
  Available: \url{https://doi.org/10.6084/m9.figshare.8096429.v3}
\BIBentrySTDinterwordspacing

\end{thebibliography}
\end{document}